\theoremstyle{plain}
\newtheorem{lemma}{Lemma}
\newtheorem{thm}{Theorem}
\newtheorem{corollary}{Corollary}
\newtheorem{proposition}{Proposition}
\theoremstyle{definition}
\newtheorem{defi}{Definition}
\newtheorem{example}{Example}
\theoremstyle{remark}
\newtheorem{remark}{Remark}
\begin{document}

\preprint{AIP/123-QED}

\title{A Sufficient Criterion for Divisibility of Quantum Channels}
\author{Frederik vom Ende}\email{frederik.vom.ende@fu-berlin.de}
\affiliation{ 
Dahlem Center for Complex Quantum
Systems, Freie Universität Berlin, Arnimallee 14, 14195 Berlin, Germany
}%

\date{\today}

\begin{abstract}
We present a simple, dimension-independent criterion which guarantees that some quantum channel $\Phi$ is divisible, i.e., that there exists a non-trivial factorization $\Phi=\Phi_1\Phi_2$. The idea is to first define an ``elementary'' channel $\Phi_2$ and then to analyze when $\Phi\Phi_2^{-1}$ is completely positive. The sufficient criterion obtained this way---which even yields an explicit factorization of $\Phi$---is that one has to find orthogonal unit vectors $x,x^\perp$ such that $\langle x^\perp|\mathcal K_\Phi\mathcal K_\Phi^\perp|x\rangle=\langle x|\mathcal K_\Phi\mathcal K_\Phi^\perp|x\rangle=\{0\}$ where $\mathcal K_\Phi$ is the Kraus subspace of $\Phi$ and $\mathcal K_\Phi^\perp$ is its orthogonal complement. Of course, using linearity this criterion can be reduced to finitely many equalities. Generically, this division even lowers the Kraus rank which is why repeated application---if possible---results in a factorization of $\Phi$ into in some sense ``simple'' channels. Finally, be aware that our techniques are not limited to the particular elementary channel we chose.
\end{abstract}

\maketitle

\section{Introduction}\label{sec:intro}

The aim of open systems theory is to study the---often inevitable---interaction between a quantum system and its surroundings, as well as the resulting system's dynamics -- but without explicitly keeping track of the environment.
Mathematically, this boils down to describing the evolution of a quantum system via a family of quantum channels
$
\{\Lambda_t\}_{t\geq 0}
$
which act only on the given system.
In the worst case $\Lambda_t$ comes about via a reduced memory kernel master equation \mbox{$\dot\Lambda_t= \int_0^t K_{t-\tau}\cdot\Lambda_\tau\,d\tau$} also known as Nakajima--Zwanzig equation \cite{Haake73,Nakajima58,Zwanzig60}.
However, there are plenty of scenarios where the reduced dynamics admit more structure so computations can be simplified.
Such scenarios are commonly characterized via the notion of 
divisibility\cite{Wolf08b,RHP10,ChruKossRivas11}
where one asks the following question: 
Given some dynamics $\{\Lambda_t\}_{t\geq 0}$ does there exist a family of linear maps $\{V_{t,s}\}_{0\leq s\leq t}$---called propagators---such
that $\Lambda_t=V_{t,s}\Lambda_s$ for all $0\leq s\leq t$?
And if such propagators exist, what further properties do they have? 
The most favorable notion here is
(time-dependent) Markovianity of the dynamics \cite{Wolf08a}
which is known to be equivalent to the propagators $V_{t,s}$ being completely positive.
Sometimes such evolutions are called ``memoryless'' because the propagators 
guarantee that the evolution from time~$s$ to time $t>s$ does not depend on anything that happened prior to time~$s$.
Yet, weaker notions such as $k$- or just $P$-divisibility (i.e., the propagators are not completely positive but just ($k$-)positive) \cite{CM14} as well as Kadison--Schwarz divisibility \cite{CM19} have been investigated in the past.
Importantly, $P$-divisibility is equivalent to lack of information backflow from the system to the environment 
\cite{ChruKossRivas11,CM14}.
For more results regarding divisibility in the dynamical regime we refer to the recent report of Chru{\'s}ci{\'n}ski \cite{Chrus22}.

Departing from dynamics for the moment, the notion of divisibility first came about for (static) quantum channels\cite{Wolf08a}, refer also to Sec.~\ref{sec_prelim} for more detail.
There, given some channel $\Phi$, one asks whether there exists a non-trivial decomposition $\Phi=\Phi_1\Phi_2$ into other quantum channels; hence this can be interpreted as the channel analogue of CP-divisible dynamics.
This concept is most interesting from a semigroup perspective: start from the (quite realistic) assumption that an experimenter can implement only some, but not all possible quantum operations on their system. Yet it could of course be that every---or at least some desired---operation can be implemented via products of what the experimenter can already do. More formally the following questions emerge: 1.~Given some subset $\mathcal S$ of quantum channels what is the semigroup generated by $\mathcal S$, i.e., what is $\langle\mathcal S\rangle_S:=\{\Phi_1\cdot\ldots\cdot\Phi_k:k\in\mathbb N,\Phi_1,\ldots,\Phi_k\in\mathcal S\}$? Similarly, can one find ``small'' sets of channels such that the generated semigroup features all channels?
2.~Given some target channel $\Phi$ can it be ``divided'' into products of elements from $\mathcal S$? (This is of course equivalent to the membership problem $\Phi\in\langle\mathcal S\rangle_S$.)
These questions relate to concepts like universality of gate sets or (approximate) gate compilation\cite{NC10,dawson2005solovay} as these are the corresponding questions in the field of quantum computing.
While there the group structure allows for powerful tools, quantum channels do not form a group but only a semigroup.
This is also why the problem of a universal set of quantum channels has, so far, only been settled in the one-qubit case
\cite[Thm.~5.19]{BGN14}.

With this in mind this paper's core idea is to consider and answer the reverse question: Instead of asking whether some decomposition $\Phi=\Phi_1\Phi_2$ is possible we want to find an operation $\Phi_2$ which is in some sense ``elementary'' and for which one can find simple conditions under which $\Phi_2$ can be ``divided'' from some given channel $\Phi$ (i.e., conditions under which $\Phi\Phi_2^{-1}$ is again a channel).
Originally, this approach was considered for non-negative matrices: there, a divisible matrix is called ``factorizable'' and a lack of divisibility is dubbed ``prime'' \cite{RS74,dCG81,PHS98,vandenHof99}. In the original paper on matrix primes Richman and Schneider considered a two-level transformation $C_\varepsilon$ (based on some parameter $\varepsilon>0$) and, given any non-negative matrix $A$, they characterized when $AC_\varepsilon^{-1}$ is again non-negative for some $\varepsilon$. Their characterization only depends on where the zeros are located in the columns of $A$
\cite[Thm~2.4]{RS74}, cf.~also the end of Sec.~\ref{sec32} below.
Based on this our---quite non-trivial---task will be to lift their idea to quantum channels. The analogue of the two-level matrix $C_\varepsilon$ will be a channel $\Psi_\lambda$ of Kraus rank $2$ (cf.~Sec.~\ref{sec_main} for more detail).
Interestingly, the idea of ``dividing'' by a channel of Kraus rank $2$ has already been mentioned briefly in the original channel divisibility paper \cite[Thm.~11]{Wolf08a} but, until now, this has not been followed up on.
Either way this idea leads to our main result:\medskip

\noindent\textbf{Theorem~\ref{thm_main_div} (Informal).}\textit{
Given any quantum channel $\Phi$ with Kraus operators $\{K_j\}_j$, if there exist
 orthogonal unit vectors $x,x^\perp$ such that 
$\langle x^\perp|K_j^\dagger G_k|x\rangle=0=\langle x|K_j^\dagger G_k|x\rangle$ for all $j,k$---where $\{G_k\}_k$ is any basis of ${\rm span}\{
K_j:j
\}^\perp$---then $\Phi$ is divisible.
If, additionally, $\Phi(|x\rangle\langle x|)\neq\Phi(|x^\perp\rangle\langle x^\perp|)$,
then there exist channels $\Psi_1,\Psi_2$ with $\Phi=\Psi_1\Psi_2$ such that $\Psi_2$ has Kraus rank $2$ and the Kraus rank of $\Psi_1$ is strictly less than the Kraus rank of $\Phi$.
}\medskip

What is more: this is not only an abstract divisibility result but once such $x,x^\perp$ have been found one gets an explicit decomposition $\Phi=(\Phi\Phi_2^{-1})\Phi_2$ (where the channel $\Phi_2$ of Kraus rank $2$ depends, of course, on $x,x^\perp$).
We will prove this by showing that under the above condition the kernel of the Choi matrix of $\Phi$ does not immediately shrink when applying $\Phi_2^{-1}$ from the right. The necessary preliminaries for this can be found in Section~\ref{sec_prelim}.
Then Section~\ref{sec_main} contains a necessary condition for $\Phi\Phi_2^{-1}$ to again be a channel (Sec.~\ref{sec31}),
and our main result is presented in Sec.~\ref{sec32}.
After illustrating our core idea by means of some examples in Section~\ref{sec_ex},
we will conclude and present a brief outlook in Section~\ref{sec_coout}.
In particular, we will argue that this work serves as a ``blueprint'' because the tools presented herein can be adopted to any other elementary channel which one wants to factor out.

\section{Preliminaries: Divisibility and the Kraus subspace}\label{sec_prelim}

First some notation. We write $\mathcal L(\mathbb C^{n\times n},\mathbb C^{m\times m})$ for the collection of all linear maps from $\mathbb C^{n\times n}$ to $\mathbb C^{m\times m}$.
Moreover, $\mathsf{CP}(n,m)\subset \mathcal L(\mathbb C^{n\times n},\mathbb C^{m\times m})$ is the subset of all completely positive maps, that is, those maps $\Phi$ for which $({\rm id}_k\otimes\Phi)(A)\geq 0$ for all $A\in \mathbb C^{k\times k}\otimes\mathbb C^{n\times n}$ positive semi-definite and all $k\in\mathbb N$.
We write $\mathsf{CP}(n):=\mathsf{CP}(n,n)$, as well as $\mathsf{CPTP}(n)$
for those $\Phi\in\mathsf{CP}(n)$ which additionally preserve the trace of any input; the elements of $\mathsf{CPTP}(n)$ are usually called \textit{quantum channels}. Finally, we will use the adjoint representation ${\rm Ad}_X(Y):=XYX^{-1}$ for any $X,Y\in\mathbb C^{n\times n}$ with $X$ invertible. For the unitary group $\mathsf U(n)$ and the special unitary group $\mathsf{SU}(n)$ this of course becomes ${\rm Ad}_U=U(\cdot)U^\dagger$.

An important property of completely positive maps is their decomposition into Kraus operators: $\Phi\in\mathcal L(\mathbb C^{n\times n},\mathbb C^{m\times m})$ is completely positive if and only if there exist $\{K_j\}_{j=1}^\ell\subset\mathbb C^{m\times n}$ such that 
$\Phi\equiv\sum_{j=1}^\ell K_j(\cdot)K_j^\dagger$, cf.~\cite[Thm.~2.22]{Watrous18}{}.
Based on this one also defines $\mathcal K_\Phi:={\rm span}\{K_1,\ldots,K_\ell\}$, called the \textit{Kraus subspace},
an object central to, e.g., the task of distinguishing quantum channels \cite{DGKL16}. Note that $\mathcal K_\Phi$ is well defined because any two sets of Kraus operators to the same completely positive map are linear combinations of each other \cite[Coro.~2.23]{Watrous18}.
Another key tool in studying completely positive maps is the Choi--Jamio\l{}kowski isomorphism
\begin{align*}
\mathsf C:\mathcal L(\mathbb C^{n\times n},\mathbb C^{m\times m})&\to\mathbb C^{n\times n}\otimes\mathbb C^{m\times m}\\
\Phi&\mapsto ({\rm id}_n\otimes\Phi)|\eta\rangle\langle\eta|=\sum_{j,k=1}^n|j\rangle\langle k|\otimes\Phi(|j\rangle\langle k|)
\end{align*}
where $|\eta\rangle:=\sum_{j=1}^n|j\rangle\otimes|j\rangle$ is the unnormalized maximally entangled state.
Obviously $\mathsf C$ is linear, well known to be bijective 
\cite[Sec.~4.4.3]{Heinosaari12}, and its key feature is that it establishes a one-to-one correspondence between completely positive maps and positive semi-definite matrices \cite{Choi75}.
Moreover, the number ${\rm rank}(\mathsf C(\Phi))$ is known as the \textit{Kraus rank} of $\Phi$.

To relate the Choi--Jamio\l{}kowski isomorphism to the Kraus representation we recall the concept of vectorization \cite[Ch.~4.2 ff.]{HJ2} where one defines ${\rm vec}(X):=\sum_{j=1}^n|j\rangle\otimes X|j\rangle\in\mathbb C^n\otimes\mathbb C^m\simeq\mathbb C^{mn}$ for any $X\in\mathbb C^{m\times n}$. Equivalently, ${\rm vec}(X)=({\bf1}_n\otimes X)|\eta\rangle$ so $|\eta\rangle={\rm vec}({\bf1}_n)$.
Expectedly, ${\rm vec}$ is invertible (and even unitary) with inverse ${\rm vec}^{-1}(x)=\sum_j(\langle j|\otimes{\bf1})|x\rangle\langle j|$.
The most well-known property of vectorization is that ${\rm vec}(ABC)=(C^T\otimes A){\rm vec}(B)$ for all ``compatible'' matrices $A,B,C$.
In particular this means
\begin{equation}\label{eq:C_AB}
\mathsf C(A(\cdot)B^\dagger)=\sum_{j,k=1}^n|j\rangle\langle k|\otimes A|j\rangle\langle k|B^\dagger =\Big( \sum_{j=1}^n|j\rangle\otimes A|j\rangle \Big)\Big( \sum_{k=1}^n|k\rangle\otimes B|k\rangle \Big)^\dagger ={\rm vec}(A){\rm vec}(B)^\dagger
\end{equation}
for all $A,B\in\mathbb C^{m\times n}$.
For our purpose it will also be important to know how Choi--Jamio\l{}kowski behaves under multiplication:
\begin{lemma}\label{lemma_choi_prod}
Let $m,m',n,n'\in\mathbb N$, $\Phi\in\mathcal L(\mathbb C^{n\times n},\mathbb C^{m\times m})$, and $\Psi_K\in\mathsf{CP}(m,m')$, $\Psi_L\in\mathsf{CP}(n',n)$ be given. If $\{K_a\}_a\subset\mathbb C^{m'\times m}$, $\{L_b\}_b\subset\mathbb C^{n\times n'}$ are any Kraus operators of $\Psi_K$, $\Psi_L$, respectively, then
$$
\mathsf C(\Psi_K\Phi\Psi_L)=\sum_{a,b} ( L_b^T \otimes K_a ) \mathsf C(\Phi)(L_b^T\otimes K_a)^\dagger \,.
$$
\end{lemma}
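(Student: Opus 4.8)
The plan is to exploit linearity and reduce to the simplest possible maps. Both sides of the asserted identity depend linearly on $\Phi$: the left-hand side because $\mathsf C$ is linear and $\Phi\mapsto\Psi_K\Phi\Psi_L$ is linear (with $\Psi_K,\Psi_L$ held fixed), and the right-hand side visibly so. Hence it suffices to verify the identity on a spanning set of $\mathcal L(\mathbb C^{n\times n},\mathbb C^{m\times m})$. A convenient such set is $\{A(\cdot)B^\dagger:A,B\in\mathbb C^{m\times n}\}$: by \eqref{eq:C_AB} one has $\mathsf C(A(\cdot)B^\dagger)={\rm vec}(A){\rm vec}(B)^\dagger$, and since ${\rm vec}$ maps $\mathbb C^{m\times n}$ onto $\mathbb C^{mn}$ these images exhaust all rank-one matrices in $\mathbb C^{n\times n}\otimes\mathbb C^{m\times m}$, which span the whole space; bijectivity of $\mathsf C$ transports this spanning property back. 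So I would fix arbitrary $A,B\in\mathbb C^{m\times n}$ and prove the identity just for $\Phi=A(\cdot)B^\dagger$.

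For such a $\Phi$ one computes directly, for any input $X$,
\begin{equation*}
(\Psi_K\Phi\Psi_L)(X)=\sum_{a,b}K_aAL_b\,X\,L_b^\dagger B^\dagger K_a^\dagger=\sum_{a,b}(K_aAL_b)\,X\,(K_aBL_b)^\dagger ,
\end{equation*}
so by linearity of $\mathsf C$ together with another application of \eqref{eq:C_AB},
\begin{equation*}
\mathsf C(\Psi_K\Phi\Psi_L)=\sum_{a,b}{\rm vec}(K_aAL_b)\,{\rm vec}(K_aBL_b)^\dagger .
\end{equation*}
Now I would invoke the vectorization identity ${\rm vec}(PQR)=(R^T\otimes P){\rm vec}(Q)$ recalled above with $P=K_a$, $Q\in\{A,B\}$, $R=L_b$, turning each factor into ${\rm vec}(K_aAL_b)=(L_b^T\otimes K_a){\rm vec}(A)$ and likewise for $B$. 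Pulling the common matrix $(L_b^T\otimes K_a)$ out on the left and $(L_b^T\otimes K_a)^\dagger$ out on the right leaves the middle equal to ${\rm vec}(A){\rm vec}(B)^\dagger=\mathsf C(\Phi)$, and summing over $a,b$ yields precisely the claimed formula.

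There is no deep obstacle here; the only point requiring care is the bookkeeping of the tensor factors — in particular keeping track that it is the transpose $L_b^T$ (and not $L_b$ or $\overline{L_b}$) that appears and that $K_a$ occupies the second tensor slot — which is exactly what the convention ${\rm vec}(PQR)=(R^T\otimes P){\rm vec}(Q)$ dictates, so one must make sure the input and output spaces of the three maps are matched consistently with the dimensions listed in the statement. One could instead expand everything in the computational basis and grind both sides out, but the reduction to rank-one maps above avoids that index bookkeeping entirely. Finally, it is worth remarking that the right-hand side is a priori independent of the chosen Kraus operators of $\Psi_K$ and $\Psi_L$; this is automatic, since the left-hand side manifestly does not depend on that choice.
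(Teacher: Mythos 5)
Your proposal is correct and follows essentially the same route as the paper: the paper writes $\Phi=\sum_j A_j(\cdot)B_j^\dagger$ (citing the general decomposition of linear maps into such products) and then performs exactly your computation with $\operatorname{vec}(K_aA_jL_b)=(L_b^T\otimes K_a)\operatorname{vec}(A_j)$, which is just your linearity reduction to rank-one summands carried out in one step.
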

\begin{proof}
By \cite[Coro.~2.21]{Watrous18} there exist $\{A_j\}_j,\{B_j\}_j\subset\mathbb C^{m\times n}$
such that $\Phi=\sum_jA_j(\cdot)B_j^\dagger $
so, by assumption, $\Psi_K\Phi\Psi_L=\sum_{a,b,j}K_aA_jL_b(\cdot)(K_aB_jL_b)^\dagger $.
As seen previously in Eq.~\eqref{eq:C_AB} this implies $\mathsf C(\Psi_K\Phi\Psi_L)=\sum_{a,b,j}{\rm vec}(K_aA_jL_b){\rm vec}(K_aB_jL_b)^\dagger $
which in turn concludes the proof:
\begin{align*}
\mathsf C(\Psi_K\Phi\Psi_L)&=\sum_{a,b,j}\Big((L_b^T\otimes K_a) {\rm vec}(A_j) \Big)\Big((L_b^T\otimes K_a) {\rm vec}(B_j) \Big)^\dagger \\
&=\sum_{a,b}(L_b^T\otimes K_a)\Big( \sum_j {\rm vec}(A_j) {\rm vec}(B_j) ^\dagger \Big)(L_b^T\otimes K_a)^\dagger\\
&=\sum_{a,b} ( L_b^T \otimes K_a ) \mathsf C(\Phi)(L_b^T\otimes K_a)^\dagger.\qedhere
\end{align*}
\end{proof}
\noindent As a direct consequence we obtain the familiar transformation rule of the Choi--Jamio\l{}kowski isomorphism under unitary channels\cite[Ch.~10.5]{Bengtsson17}: $\mathsf C({\rm Ad}_{V^\dagger }\cdot \Phi\cdot{\rm Ad}_U)=(U^T\otimes V^\dagger )\mathsf C(\Phi)(U^T\otimes V^\dagger )^\dagger $ for all $U,V\in\mathsf{SU}(n)$ and all $\Phi\in\mathcal L(\mathbb C^{n\times n})$.

What will also be important is the following connection between the kernel of the Choi matrix and the orthogonal complement of the Kraus subspace:
\begin{lemma}\label{lemma_kerC_vecKPhi}
For all $\Phi\in\mathsf{CP}(n)$ it holds that ${\rm ker}(\mathsf C(\Phi))={\rm vec}(\mathcal K_\Phi^\perp)$
where, here and henceforth, $\mathcal K_\Phi^\perp$ is the subspace orthogonal to $\mathcal K_\Phi$ with respect to the Hilbert-Schmidt inner product $\langle A,B\rangle_{\rm HS}:={\rm tr}(A^\dagger B)$.
\end{lemma}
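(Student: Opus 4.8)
The plan is to exploit the formula $\mathsf C(\Phi)=\sum_j {\rm vec}(K_j){\rm vec}(K_j)^\dagger$ coming from Eq.~\eqref{eq:C_AB} together with linearity of $\mathsf C$, where $\{K_j\}_j$ is any fixed set of Kraus operators of $\Phi$. Since $\mathsf C(\Phi)$ is positive semi-definite and written as a sum of rank-one positive terms, a vector $v\in\mathbb C^n\otimes\mathbb C^n$ lies in ${\rm ker}(\mathsf C(\Phi))$ if and only if $\langle v|\mathsf C(\Phi)|v\rangle=\sum_j|\langle v|{\rm vec}(K_j)\rangle|^2=0$, which holds if and only if $\langle v|{\rm vec}(K_j)\rangle=0$ for every $j$. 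Hence ${\rm ker}(\mathsf C(\Phi))={\rm span}\{{\rm vec}(K_j):j\}^\perp$.

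Next I would use that vectorization is a unitary map $\mathbb C^{n\times n}\to\mathbb C^n\otimes\mathbb C^n$ intertwining the Hilbert--Schmidt inner product on matrices with the standard inner product on vectors; indeed $\langle {\rm vec}(A),{\rm vec}(B)\rangle=\sum_{j,k}\langle j|\langle A j|B k\rangle|k\rangle={\rm tr}(A^\dagger B)=\langle A,B\rangle_{\rm HS}$. Consequently, for any subspace $\mathcal S\subseteq\mathbb C^{n\times n}$ one has ${\rm vec}(\mathcal S^\perp)=({\rm vec}(\mathcal S))^\perp$, the orthogonal complements being taken in the respective inner product spaces. Applying this with $\mathcal S=\mathcal K_\Phi={\rm span}\{K_j:j\}$ and combining with the previous paragraph yields ${\rm ker}(\mathsf C(\Phi))={\rm span}\{{\rm vec}(K_j):j\}^\perp={\rm vec}(\mathcal K_\Phi^\perp)$, as claimed.

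The only subtlety worth addressing explicitly is well-definedness: the identity must not depend on the choice of Kraus operators. But this is automatic here, since both sides of the asserted equality are manifestly independent of that choice---the left-hand side $\mathsf C(\Phi)$ depends only on $\Phi$, and $\mathcal K_\Phi$ is well defined by \cite[Coro.~2.23]{Watrous18} as already noted above. So there is no real obstacle; the argument is essentially a one-line computation once the unitarity of ${\rm vec}$ with respect to $\langle\cdot,\cdot\rangle_{\rm HS}$ is recorded. If one wanted to be even more economical, one could phrase the first paragraph as the general fact that for a positive semi-definite matrix $M=\sum_j |w_j\rangle\langle w_j|$ one has ${\rm ker}(M)={\rm span}\{w_j:j\}^\perp$, instantiated at $w_j={\rm vec}(K_j)$.
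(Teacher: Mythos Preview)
Your proof is correct and follows essentially the same route as the paper's: both use $\mathsf C(\Phi)=\sum_j{\rm vec}(K_j){\rm vec}(K_j)^\dagger$ from Eq.~\eqref{eq:C_AB}, the unitarity of ${\rm vec}$ to identify $\langle{\rm vec}(K_j)|{\rm vec}(Z)\rangle$ with $\langle K_j|Z\rangle_{\rm HS}$, and the quadratic-form argument $\langle z|\mathsf C(\Phi)|z\rangle=\sum_j|\langle{\rm vec}(K_j)|z\rangle|^2$ to characterize the kernel. The only cosmetic difference is that the paper writes out the two inclusions separately whereas you package the first step as the general fact ${\rm ker}\big(\sum_j|w_j\rangle\langle w_j|\big)={\rm span}\{w_j\}^\perp$.
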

\begin{proof}
In what follows let $\{K_j\}_j\subset\mathbb C^{n\times n}$ be any set of Kraus operators of $\Phi$.

``$\supseteq$'': Let any $Z\in\mathcal K_\Phi^\perp$ be given, that is, $\langle K_j|Z\rangle_{\rm HS}={\rm tr}(K_j^\dagger Z)=0$ for all $j$.
Using that ${\rm vec}$ is unitary this, together with~\eqref{eq:C_AB}, shows
\begin{align*}
\mathsf C(\Phi)|{\rm vec}(Z)\rangle=\sum_j|{\rm vec}(K_j)\rangle\langle{\rm vec}(K_j)|{\rm vec}(Z)\rangle=\sum_j\langle K_j|Z\rangle_{\rm HS}|{\rm vec}(K_j)\rangle
=0\,.
\end{align*}

``$\subseteq$'': Let $z\in{\rm ker}(\mathsf C(\Phi))$. As before one sees $0=\langle z|\mathsf C(\Phi)|z\rangle=\sum_j|\langle K_j|{\rm vec}^{-1}(z)\rangle_{\rm HS}|^2$ so
$\langle K_j|{\rm vec}^{-1}(z)\rangle_{\rm HS}=0$ for all $j$, i.e., ${\rm vec}^{-1}(z)\in\mathcal K_\Phi^\perp$ which is what we had to show.
\end{proof}

The final concept we recap is the one at the core of this work: $\Phi\in\mathsf{CPTP}(n)$ is called \textit{divisible}\cite{Wolf08a} if there exist $\Phi_1,\Phi_2\in \mathsf{CPTP}(n)$ such that 
$\Phi=\Phi_1\Phi_2$ where neither (or both) of $\Phi_1,\Phi_2$ are of the form ${\rm Ad}_U$ with $U\in\mathsf U(n)$.
If $\Phi\in\mathsf{CPTP}(n)$ is not divisible, then it is called \textit{indivisible}.
As an example, the only indivisible qubit channels $\Phi\in\mathsf{CPTP}(2)$ are those which are unital ($\Phi({\bf1})={\bf1}$) and which have Kraus rank $3$, cf.~\cite[Thm.~23]{Wolf08a}.
Moreover, every $\Phi\in\mathsf{CPTP}(n)$ with maximal Kraus rank (i.e., $n^2$) is known to be divisible \cite[Thm.~11]{Wolf08a}:
The idea behind this result is that in the case of maximal Kraus rank all eigenvalues of $\mathsf C(\Phi)$ are positive so given any non-unitary channel $\Psi$, by continuity there exists $\lambda\in(0,1)$ such that the eigenvalues of $\mathsf C(\Phi(\lambda\Psi+(1-\lambda){\rm id})^{-1})$ are positive. 
Thus the decomposition $\Phi=(\Phi(\lambda\Psi+(1-\lambda){\rm id})^{-1})(\lambda\Psi+(1-\lambda){\rm id})$ shows that $\Phi$ is divisible, as claimed.

\section{Main Results}\label{sec_main}
As discussed in the introduction we will port the idea of Richman and Schneider \cite{RS74} to quantum channels by means of the following map:
\begin{defi}
Given any linearly independent unit vectors
$x,y\in\mathbb C^n$ and any $\lambda\in[0,1]$ we define
$
K_\lambda:=
{\bf1}-(1-\sqrt{1-\lambda})|x\rangle\langle x|$, $ L_\lambda:=\sqrt{\lambda}|y\rangle\langle x|
$,
and
\begin{align*}
\Psi_{\lambda,x,y}:\mathbb C^{n\times n}&\to \mathbb C^{n\times n}\\
 \rho &\mapsto K_\lambda \rho K_\lambda^\dagger+L_\lambda \rho L_\lambda^\dagger\,.
 \end{align*}
\end{defi}
\noindent By definition $\Psi_{\lambda,x,y}$---or $\Psi_\lambda$ for short, whenever there is no risk of confusion---is completely positive (with Kraus rank $2$, unless $\lambda=0$) and one readily verifies that $K_\lambda^\dagger K_\lambda+L_\lambda^\dagger L_\lambda={\bf1}$, which is equivalent to $\Psi_\lambda$ being trace-preserving.
This map can be seen as ``elementary'' because for all $z\in({\rm span}\{x\})^\perp$ one has $\Psi_\lambda(|z\rangle\langle z|)=|z\rangle\langle z|$ so the action of $\Psi_\lambda$ is only non-trivial if
$|x\rangle$ is involved:
$\Psi_\lambda(|x\rangle\langle x|)=(1-\lambda)|x\rangle\langle x|+\lambda|y\rangle\langle y|$.\medskip

Let us collect some fundamental properties of $\Psi_\lambda$ in the following lemma:
\begin{lemma}\label{lemma_properties_Psilambda}
Given any linearly independent unit vectors
$x,y\in\mathbb C^n$ the following statements hold.
\begin{itemize}
\item[(i)] $\Psi_\lambda$ is bijective for all $\lambda\in[0,1)$ with inverse
$
\Psi_\lambda^{-1}=K_\lambda^{-1}(\cdot)K_\lambda^{-1}-\,\widehat L_\lambda(\cdot)\widehat L_\lambda^\dagger 
$
where
$$
\widehat L_\lambda:=\;\frac{\sqrt\lambda\langle x|y\rangle}{\sqrt{(1-\lambda)(1-\lambda|\langle x^\perp|y\rangle|^2)}}|x\rangle\langle x|+\frac{\sqrt\lambda\langle x^\perp|y\rangle}{\sqrt{1-\lambda|\langle x^\perp|y\rangle|^2}}|x^\perp\rangle\langle x|\,;
$$
here $x^\perp$ is any unit vector in ${\rm span}\{x,y\}$ which is orthogonal to $x$.
\item[(ii)] If $\langle x|y\rangle=0$, then for all $\lambda,\mu\in[0,1]$ it holds that $\Psi_{\lambda,x,y}\Psi_{\mu,x,y}=\Psi_{\lambda+\mu-\lambda\mu,x,y}$ so, in particular, $[\Psi_{\lambda,x,y},\Psi_{\mu,x,y}]=0$.
\item[(iii)] Assume $\langle x|y\rangle=0$ and let any $\Phi\in\mathsf{CPTP}(n)$ be given. If there exists $\widehat\lambda\in(0,1)$ such that $\Phi\Psi_{\widehat\lambda}^{-1}$ is completely positive, then $\Phi\Psi_{\lambda}^{-1}$ is completely positive for all $\lambda\in[0,\widehat\lambda]$.
\end{itemize}
\end{lemma}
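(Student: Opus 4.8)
The three parts build on one another, so I would treat them in order; the only genuine computation sits in (i), where the plan is to solve $\sigma=\Psi_\lambda(\rho)$ explicitly for $\rho$. Since $|x\rangle\langle x|$ is a Hermitian projection, $K_\lambda$ is Hermitian with eigenvalue $\sqrt{1-\lambda}$ on ${\rm span}\{x\}$ and $1$ on ${\rm span}\{x\}^\perp$, hence invertible for $\lambda\in[0,1)$ with $K_\lambda^{-1}={\bf1}+\big((1-\lambda)^{-1/2}-1\big)|x\rangle\langle x|$ and $K_\lambda^{-1}|x\rangle=(1-\lambda)^{-1/2}|x\rangle$. As $L_\lambda\rho L_\lambda^\dagger=\lambda\langle x|\rho|x\rangle\,|y\rangle\langle y|$, the identity $\sigma=K_\lambda\rho K_\lambda^\dagger+\lambda\langle x|\rho|x\rangle|y\rangle\langle y|$ rearranges (using $K_\lambda^{-\dagger}=K_\lambda^{-1}$) to $\rho=K_\lambda^{-1}\sigma K_\lambda^{-1}-\lambda\langle x|\rho|x\rangle\,|v\rangle\langle v|$ with $|v\rangle:=K_\lambda^{-1}|y\rangle$. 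Sandwiching this between $\langle x|$ and $|x\rangle$ pins down the remaining scalar, $\langle x|\rho|x\rangle=\frac{\langle x|\sigma|x\rangle}{(1-\lambda)(1+\lambda|\langle x|v\rangle|^2)}$ (the denominator is strictly positive), and back-substitution yields the explicit inverse $\Psi_\lambda^{-1}(\sigma)=K_\lambda^{-1}\sigma K_\lambda^{-1}-\widehat L_\lambda\sigma\widehat L_\lambda^\dagger$ with $\widehat L_\lambda=c\,|v\rangle\langle x|$ and $|c|^2=\frac{\lambda}{(1-\lambda)(1+\lambda|\langle x|v\rangle|^2)}$; in particular $\Psi_\lambda$ is bijective. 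It then remains to match the stated form: writing $|y\rangle=\langle x|y\rangle|x\rangle+\langle x^\perp|y\rangle|x^\perp\rangle$ gives $|v\rangle=(1-\lambda)^{-1/2}\langle x|y\rangle|x\rangle+\langle x^\perp|y\rangle|x^\perp\rangle$ and $|\langle x|v\rangle|^2=|\langle x|y\rangle|^2/(1-\lambda)$, so by $|\langle x|y\rangle|^2+|\langle x^\perp|y\rangle|^2=1$ one has $1+\lambda|\langle x|v\rangle|^2=(1-\lambda|\langle x^\perp|y\rangle|^2)/(1-\lambda)$; inserting this into $c$ and $|v\rangle$ reproduces exactly the $\widehat L_\lambda$ of the statement (only $\widehat L_\lambda(\cdot)\widehat L_\lambda^\dagger$ enters, so the phase of $c$ is irrelevant). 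Alternatively one can simply verify $\Psi_\lambda^{-1}\Psi_\lambda={\rm id}$ from the given formula.

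For (ii), assuming $\langle x|y\rangle=0$, I would just multiply Kraus operators of $\Psi_\mu$ and $\Psi_\nu$. Because $|x\rangle\langle x|$ is a projection, $K_\mu K_\nu={\bf1}-\big(1-\sqrt{(1-\mu)(1-\nu)}\big)|x\rangle\langle x|=K_{\mu+\nu-\mu\nu}$, using $(1-\mu)(1-\nu)=1-(\mu+\nu-\mu\nu)$; moreover $K_\mu L_\nu=\sqrt\nu\,|y\rangle\langle x|$, $L_\mu K_\nu=\sqrt{\mu(1-\nu)}\,|y\rangle\langle x|$, and $L_\mu L_\nu=\sqrt{\mu\nu}\,\langle x|y\rangle\,|y\rangle\langle x|=0$, each time using $\langle x|y\rangle=0$. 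Hence $\Psi_\mu\Psi_\nu=K_{\mu+\nu-\mu\nu}(\cdot)K_{\mu+\nu-\mu\nu}^\dagger+\big(\nu+\mu(1-\nu)\big)|y\rangle\langle x|(\cdot)|x\rangle\langle y|$, and since $\nu+\mu(1-\nu)=\mu+\nu-\mu\nu$ the second term equals $L_{\mu+\nu-\mu\nu}(\cdot)L_{\mu+\nu-\mu\nu}^\dagger$; as $\mu+\nu-\mu\nu=1-(1-\mu)(1-\nu)\in[0,1]$ this is a genuine member of the family, so $\Psi_\mu\Psi_\nu=\Psi_{\mu+\nu-\mu\nu}$, and commutativity follows from the symmetry of $\mu+\nu-\mu\nu$.

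For (iii), the idea is to realise $\Psi_\lambda^{-1}$ as a completely positive map composed with $\Phi\Psi_{\widehat\lambda}^{-1}$. Given $\lambda\in[0,\widehat\lambda]$, set $\mu:=\frac{\widehat\lambda-\lambda}{1-\lambda}$, which lies in $[0,1)$ since $0\le\widehat\lambda-\lambda<1-\lambda$, and note $\lambda+\mu-\lambda\mu=\widehat\lambda$. By part (ii), $\Psi_\lambda\Psi_\mu=\Psi_{\widehat\lambda}$ and $\Psi_\lambda,\Psi_\mu$ commute; since $\lambda,\mu,\widehat\lambda\in[0,1)$, all three maps are invertible by part (i). Rearranging $\Psi_\lambda\Psi_\mu=\Psi_{\widehat\lambda}$ gives $\Psi_\mu\Psi_{\widehat\lambda}^{-1}=\Psi_\lambda^{-1}$, and as $\Psi_\mu$ commutes with $\Psi_{\widehat\lambda}$---hence with $\Psi_{\widehat\lambda}^{-1}$---we obtain $\Psi_\lambda^{-1}=\Psi_{\widehat\lambda}^{-1}\Psi_\mu$. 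Therefore $\Phi\Psi_\lambda^{-1}=\big(\Phi\Psi_{\widehat\lambda}^{-1}\big)\Psi_\mu$ is the composition of the completely positive map $\Phi\Psi_{\widehat\lambda}^{-1}$ (completely positive by hypothesis) with the completely positive map $\Psi_\mu$, hence completely positive; the endpoints $\lambda=0$ (where $\mu=\widehat\lambda$) and $\lambda=\widehat\lambda$ (where $\mu=0$, $\Psi_\mu={\rm id}$) are included.

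The main obstacle I anticipate is the bookkeeping in (i): inverting $\Psi_\lambda$ is mechanical, but bringing $|c|^2$ and $|v\rangle$ into the precise closed form of $\widehat L_\lambda$ requires carefully tracking the rescaling of the $x$-component of $|y\rangle$ by $(1-\lambda)^{-1/2}$ and the substitution $|\langle x^\perp|y\rangle|^2=1-|\langle x|y\rangle|^2$. Parts (ii) and (iii) are essentially formal once one spots the substitution $\mu=(\widehat\lambda-\lambda)/(1-\lambda)$ and observes that interchanging $\Psi_{\widehat\lambda}^{-1}$ with $\Psi_\mu$ in (iii) genuinely relies, via (ii), on the hypothesis $\langle x|y\rangle=0$.
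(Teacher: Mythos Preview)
Your proof is correct and, for parts (ii) and (iii), essentially identical to the paper's: you multiply the Kraus operators and observe the semigroup law $\Psi_\mu\Psi_\nu=\Psi_{\mu+\nu-\mu\nu}$, then exploit the substitution $\mu=(\widehat\lambda-\lambda)/(1-\lambda)$ to write $\Phi\Psi_\lambda^{-1}=(\Phi\Psi_{\widehat\lambda}^{-1})\Psi_\mu$. (Your detour through commutativity in (iii) is harmless but unnecessary: from $\Psi_\mu\Psi_\lambda=\Psi_{\widehat\lambda}$ one gets $\Psi_\lambda^{-1}=\Psi_{\widehat\lambda}^{-1}\Psi_\mu$ directly, without swapping factors.)

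For part (i) there is a small but genuine difference in route. The paper simply \emph{verifies} the stated formula for $\Psi_\lambda^{-1}$ by listing the operator identities $K_\lambda\widehat L_\lambda=(1-\lambda|\langle x^\perp|y\rangle|^2)^{-1/2}L_\lambda$, $L_\lambda K_\lambda^{-1}=(1-\lambda)^{-1/2}L_\lambda$, and $L_\lambda\widehat L_\lambda=\tfrac{\sqrt\lambda\langle x|y\rangle}{\sqrt{(1-\lambda)(1-\lambda|\langle x^\perp|y\rangle|^2)}}L_\lambda$, from which $\Psi_\lambda\Psi_\lambda^{-1}=\mathrm{id}$ follows mechanically. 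You instead \emph{derive} the inverse by solving $\sigma=\Psi_\lambda(\rho)$: you pull back through $K_\lambda^{-1}$, isolate the scalar $\langle x|\rho|x\rangle$, and then massage $c\,|v\rangle\langle x|$ into the displayed $\widehat L_\lambda$. Your approach explains where the formula comes from; the paper's is shorter once the answer is on the table. You even flag the verification route as an alternative, so there is no daylight in substance.
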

\begin{proof}
(i): 
This follows from the readily-verified identities $K_\lambda^{-1}={\bf1}-(1-(1-\lambda)^{-1/2})|x\rangle\langle x|$, $K_\lambda\widehat L_\lambda=(1-\lambda|\langle x^\perp|y\rangle|^2)^{-1/2}L_\lambda$, $L_\lambda K_\lambda^{-1}=(1-\lambda)^{-1/2}L_\lambda$, and $L_\lambda\widehat L_\lambda=\frac{\sqrt\lambda\langle x|y\rangle}{\sqrt{(1-\lambda)(1-\lambda|\langle x^\perp|y\rangle|^2)}}L_\lambda$.

(ii): Again, the following are identities are key: $K_\lambda K_\mu =K_{\lambda+\mu-\lambda\mu}$, $L_\lambda K_\mu=\sqrt{1-\mu}L_\lambda$, and---because $\langle x|y\rangle=0$---$K_\lambda L_\mu=L_\mu$ as well as $L_\lambda L_\mu=0$.

(iii): Let any $\lambda\in[0,\widehat\lambda]$ be given; thus $\frac{\widehat\lambda-\lambda}{1-\lambda}\in[0,1)$. Then~(ii) implies that $\Psi_{\frac{\widehat\lambda-\lambda}{1-\lambda}}\Psi_\lambda=\Psi_{\widehat\lambda}$.
In particular this shows
$
\Phi\Psi_{\lambda}^{-1}=(\Phi\Psi_{\widehat\lambda}^{-1})\Psi_{\frac{\widehat\lambda-\lambda}{1-\lambda}}
$
meaning $\Phi\Psi_{\lambda}^{-1}$ is completely positive as composition of two completely positive maps.
\end{proof}
\noindent As an aside, the multiplication $(\lambda,\mu)\mapsto\lambda+\mu-\lambda\mu$ also comes up---even in a slightly more general form---for stochastic matrices and in the quantum thermodynamics of a qubit \cite[Sec.~IV]{vomEnde22thermal}.
\begin{remark}
While everything that follows works for general $y$, for simplicity we will only consider the special case $\langle x|y\rangle=0$, e.g., $y=x^\perp$.
Then $\widehat L_\lambda=(1-\lambda)^{-1/2}L_\lambda$ so Lemma~\ref{lemma_properties_Psilambda}~(i) becomes
\begin{equation}\label{eq:Psiinv_xy_orthogonal}
\Psi_\lambda^{-1}=K_\lambda^{-1}(\cdot)K_\lambda^{-1}-\frac{1}{1-\lambda}\,L_\lambda(\cdot)L_\lambda^\dagger\,.
\end{equation}
We expect most of our results to hold for general $y$; however, proving these general statements would be more involved and not necessary for what we are trying to demonstrate in this work.
For example, Lemma~\ref{lemma_properties_Psilambda}~(ii) is false if $\langle x|y\rangle\neq 0$, yet Lemma~\ref{lemma_properties_Psilambda}~(iii) probably still holds although one would have to prove it in a different way.
\end{remark}
The question which will lead us to our main result is the following:
Given some $\Phi\in\mathsf{CPTP}(n)$, under what conditions do there exist $\lambda\in(0,1)$ as well as orthogonal unit vectors $x,x^\perp$ such that $\Phi\Psi_\lambda^{-1}\in\mathsf{CPTP}(n)$?
Note that this problem has to depend on $\Phi$ because $\Psi_\lambda^{-1}$ on its own is never completely positive (as $\Psi_\lambda$ is not unitary, unless of course $\lambda=0$), cf.~\cite[Prop.~4.31]{Heinosaari12} or \cite[Ch.~2, Coro.~3.2]{Davies76}.
A first simple observation to make is that this problem is ``invariant'' under the usual unitary degree of freedom:

\begin{lemma}\label{lemma_div_unitary_equiv}
Let $\Phi\in\mathsf{CPTP}(n)$, $U,V\in\mathsf U(n)$ be given and
define $\Phi_{U,V}:={\rm Ad}_{V^\dagger }\cdot \Phi\cdot{\rm Ad}_U$.
The following statements are equivalent.
\begin{itemize}
\item[(i)] There exist $\lambda\in(0,1)$ and orthogonal unit vectors $x,x^\perp$ such that $\Phi\Psi_\lambda^{-1}\in\mathsf{CPTP}(n)$.
\item[(ii)] There exist $\lambda\in(0,1)$ and orthogonal unit vectors $x,x^\perp$ such that $\Phi_{U,V}\Psi_\lambda^{-1}\in\mathsf{CPTP}(n)$.
\end{itemize}
\end{lemma}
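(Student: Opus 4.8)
The plan is to exploit the unitary covariance of the family $\Psi_{\lambda,x,x^\perp}$. The first---and essentially only nontrivial---step is to establish the identity
\[
{\rm Ad}_W\cdot\Psi_{\lambda,x,x^\perp}\cdot{\rm Ad}_{W^\dagger}=\Psi_{\lambda,Wx,Wx^\perp}\qquad\text{for all }W\in\mathsf U(n),
\]
which follows directly from the Kraus representation: conjugation by $W$ turns $K_\lambda={\bf1}-(1-\sqrt{1-\lambda})|x\rangle\langle x|$ into ${\bf1}-(1-\sqrt{1-\lambda})|Wx\rangle\langle Wx|$ and $L_\lambda=\sqrt\lambda|x^\perp\rangle\langle x|$ into $\sqrt\lambda|Wx^\perp\rangle\langle Wx|$, while $Wx,Wx^\perp$ remain orthonormal. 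Taking inverses on both sides and using $({\rm Ad}_{W^\dagger})^{-1}={\rm Ad}_W$ together with Lemma~\ref{lemma_properties_Psilambda}~(i) (which applies since $\lambda\in[0,1)$) yields the dual identity $\Psi_{\lambda,Wx,Wx^\perp}^{-1}={\rm Ad}_W\cdot\Psi_{\lambda,x,x^\perp}^{-1}\cdot{\rm Ad}_{W^\dagger}$.

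For the implication (i)$\Rightarrow$(ii) I would assume $\Phi\Psi_{\lambda,x,x^\perp}^{-1}\in\mathsf{CPTP}(n)$ for some $\lambda\in(0,1)$ and orthonormal $x,x^\perp$, and then specialize the dual identity to $W=U^\dagger$. Composing with $\Phi_{U,V}={\rm Ad}_{V^\dagger}\cdot\Phi\cdot{\rm Ad}_U$, the factors ${\rm Ad}_U$ and ${\rm Ad}_{U^\dagger}$ cancel and one is left with
\[
\Phi_{U,V}\,\Psi_{\lambda,U^\dagger x,U^\dagger x^\perp}^{-1}={\rm Ad}_{V^\dagger}\cdot\big(\Phi\Psi_{\lambda,x,x^\perp}^{-1}\big)\cdot{\rm Ad}_U\,.
\]
Since $\Phi\Psi_{\lambda,x,x^\perp}^{-1}$ is a channel by hypothesis and the unitary channels ${\rm Ad}_{V^\dagger},{\rm Ad}_U$ lie in $\mathsf{CPTP}(n)$, the right-hand side is a composition of channels, hence itself a channel; as $U^\dagger x,U^\dagger x^\perp$ are again orthonormal, this is precisely statement (ii) (with the same $\lambda$).

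The converse (ii)$\Rightarrow$(i) needs no further work: the map $\Phi\mapsto\Phi_{U,V}$ is a bijection of $\mathsf{CPTP}(n)$ with $(\Phi_{U,V})_{U^\dagger,V^\dagger}=\Phi$, so applying the implication just established to the channel $\Phi_{U,V}$ and the unitaries $U^\dagger,V^\dagger$ turns (ii) into (i). Equivalently, the displayed computation already exhibits an explicit bijection $(\lambda,x,x^\perp)\mapsto(\lambda,U^\dagger x,U^\dagger x^\perp)$ between the admissible triples for $\Phi$ and those for $\Phi_{U,V}$. I do not anticipate a genuine obstacle here; the only points that require care are the exact placement of the daggers---so that one conjugates $x$ by $U^\dagger$ rather than by $U$---and the routine facts that unitary conjugation preserves orthonormality and trace-preservation and that $\mathsf{CPTP}(n)$ is closed under composition.
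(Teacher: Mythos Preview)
Your argument is correct and essentially identical to the paper's proof: both establish the covariance $\Psi_{\lambda,U^\dagger x,U^\dagger x^\perp}={\rm Ad}_{U^\dagger}\cdot\Psi_{\lambda,x,x^\perp}\cdot{\rm Ad}_U$ via the Kraus operators, cancel the inner ${\rm Ad}_U\cdot{\rm Ad}_{U^\dagger}$ to obtain $\Phi_{U,V}\Psi_{\lambda,U^\dagger x,U^\dagger x^\perp}^{-1}={\rm Ad}_{V^\dagger}\cdot(\Phi\Psi_{\lambda,x,x^\perp}^{-1})\cdot{\rm Ad}_U$, and then note that the converse follows by swapping $U,V$ with $U^\dagger,V^\dagger$.
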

\begin{proof}
``(i) $\Rightarrow$ (ii)'': Define $y:=U^\dagger x$, $y^\perp:=U^\dagger x^\perp$, as well as $\widehat\Psi_\lambda$ via the two Kraus operators
$
{\bf1}-(1-\sqrt{1-\lambda})|y\rangle\langle y|$, $ \sqrt{\lambda}|y^\perp\rangle\langle y|
$. Then  $\widehat\Psi_\lambda={\rm Ad}_{U^\dagger }\cdot \Psi_\lambda\cdot{\rm Ad}_U$
and
$y,y^\perp$ are orthogonal unit vectors, thus
$
\Phi_{U,V}\widehat\Psi_\lambda^{-1}={\rm Ad}_{V^\dagger }\cdot \Phi\cdot{\rm Ad}_U\cdot {\rm Ad}_{U^\dagger }\cdot \Psi_\lambda^{-1}\cdot{\rm Ad}_U={\rm Ad}_{V^\dagger }\cdot (\Phi\cdot \Psi_\lambda^{-1})\cdot{\rm Ad}_U\in\mathsf{CPTP}(n)\,,
$
as desired.
Also ``(ii) $\Rightarrow$ (i)'' is shown analogously (replace $U,V$ by $U^\dagger ,V^\dagger $).
\end{proof}
\subsection{Necessary criterion for complete positivity of $\Phi\Psi_\lambda^{-1}$}\label{sec31}

Because the rest of this paper will be about when $\Phi\Psi_\lambda^{-1}$ is
completely positive we need an analytical expression for its Choi matrix.
This is where we combine the explicit form of $\Psi_\lambda^{-1}$ from Eq.~\eqref{eq:Psiinv_xy_orthogonal} with the
identity from Lemma~\ref{lemma_choi_prod}; doing so yields
\begin{equation}\label{eq:C_Phi_Psiinv}
\mathsf C(\Phi\Psi_\lambda^{-1})=((K_\lambda^{-1})^T\otimes{\bf1})\mathsf C(\Phi)((K_\lambda^{-1})^T\otimes{\bf1})^\dagger -\frac{1}{1-\lambda}(L_\lambda^T\otimes{\bf1})\mathsf C(\Phi)(L_\lambda^T\otimes{\bf1})^\dagger \,.
\end{equation}
Based on this we first derive a necessary criterion:
\begin{proposition}\label{prop_necess_division}
Let $\Phi\in\mathsf{CPTP}(n)$ and orthogonal unit vectors $x,x^\perp\in\mathbb C^n$ be given. If $\Phi\Psi_\lambda^{-1}$ is completely positive for some $\lambda\in(0,1)$, then one of the following equivalent statements holds.
\begin{itemize}
\item[(i)]
$\iota^\dagger (|x^\perp\rangle\langle x|^T\otimes{\bf1})\mathsf C(\Phi)(|x^\perp\rangle\langle x|^T\otimes{\bf1})^\dagger \iota=0$ with $\iota:\mathbb C^{n^2-{\rm rank}\,(\mathsf C(\Phi))}\to {\rm ker}\,(\mathsf C(\Phi))$ any unitary.
\item[(ii)] 
$
\Phi(|x^\perp\rangle\langle x^\perp|)\mathcal K_\Phi^\perp   
|x\rangle=\{0\}
$
\item[(iii)]
If $\{K_j\}_j$ is any set of Kraus operators of $\Phi$ and $\{G_k\}_k$ is any basis of $\mathcal K_\Phi^\perp$, then it holds that $\langle x^\perp|K_j^\dagger G_k|x\rangle=0$ for all $j,k$.
\end{itemize}
\end{proposition}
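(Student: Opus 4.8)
The plan is to read off complete positivity of $\Phi\Psi_\lambda^{-1}$ as the matrix inequality $\mathsf C(\Phi\Psi_\lambda^{-1})\geq 0$ and then to ``strip off'' the awkward factor $K_\lambda^{-1}$ appearing in Eq.~\eqref{eq:C_Phi_Psiinv} by a congruence transformation. Concretely I would conjugate Eq.~\eqref{eq:C_Phi_Psiinv} by the invertible matrix $K_\lambda^T\otimes{\bf1}$: because $K_\lambda^T(K_\lambda^{-1})^T=(K_\lambda^{-1}K_\lambda)^T={\bf1}$ and, by a one-line computation, $L_\lambda K_\lambda=\sqrt{\lambda(1-\lambda)}\,|x^\perp\rangle\langle x|$, this collapses to
\[
(K_\lambda^T\otimes{\bf1})\,\mathsf C(\Phi\Psi_\lambda^{-1})\,(K_\lambda^T\otimes{\bf1})^\dagger=\mathsf C(\Phi)-\lambda\,(|x^\perp\rangle\langle x|^T\otimes{\bf1})\,\mathsf C(\Phi)\,(|x^\perp\rangle\langle x|^T\otimes{\bf1})^\dagger\,.
\]
Since $\Phi\Psi_\lambda^{-1}$ is completely positive the left-hand side is positive semi-definite, so $\mathsf C(\Phi)\geq\lambda N$ with $N:=(|x^\perp\rangle\langle x|^T\otimes{\bf1})\mathsf C(\Phi)(|x^\perp\rangle\langle x|^T\otimes{\bf1})^\dagger\geq 0$. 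Compressing this operator inequality with the isometry $\iota$ onto $\ker(\mathsf C(\Phi))$ kills the $\mathsf C(\Phi)$-term (because $\mathsf C(\Phi)\iota=0$) and leaves $0\geq\lambda\,\iota^\dagger N\iota$; as $\lambda>0$ and $N\geq 0$ this forces $\iota^\dagger N\iota=0$, which is exactly statement~(i).

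It remains to check that (i), (ii), (iii) are equivalent. Since $N\geq 0$, the equality $\iota^\dagger N\iota=0$ is equivalent to $N\iota=0$, and by Lemma~\ref{lemma_kerC_vecKPhi} (i.e.\ $\ker(\mathsf C(\Phi))={\rm vec}(\mathcal K_\Phi^\perp)$) this means $N\,{\rm vec}(G)=0$ for every $G\in\mathcal K_\Phi^\perp$. Expanding $N\,{\rm vec}(G)$ via ${\rm vec}(ABC)=(C^T\otimes A){\rm vec}(B)$ and $\mathsf C(X(\cdot)Y^\dagger)={\rm vec}(X){\rm vec}(Y)^\dagger$ rewrites it as ${\rm vec}\big((\Phi(|x^\perp\rangle\langle x^\perp|)\,G\,|x\rangle)\langle x|\big)$, which vanishes iff $\Phi(|x^\perp\rangle\langle x^\perp|)\,G\,|x\rangle=0$; this is (i)$\Leftrightarrow$(ii). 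For (iii)$\Rightarrow$(ii) I would only use linearity together with $\Phi(|x^\perp\rangle\langle x^\perp|)G|x\rangle=\sum_j K_j|x^\perp\rangle\langle x^\perp|K_j^\dagger G|x\rangle$, so that $\langle x^\perp|K_j^\dagger G|x\rangle=0$ (for all $j$ and all $G\in\mathcal K_\Phi^\perp$) annihilates every summand. For (ii)$\Rightarrow$(iii) I would multiply the identity $\Phi(|x^\perp\rangle\langle x^\perp|)G_k|x\rangle=0$ from the left by $\langle x|G_k^\dagger$, which yields $\sum_j|\langle x^\perp|K_j^\dagger G_k|x\rangle|^2=0$ and hence that each term vanishes. (The same ``sandwich'' observation shows that the condition in~(iii) does not depend on the chosen Kraus operators of $\Phi$ nor on the chosen basis of $\mathcal K_\Phi^\perp$, so all three statements are well-posed.)

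All computations here are elementary, so the only real difficulty is careful bookkeeping: getting the transposes and adjoints right when conjugating the Choi matrix, in particular tracking $\overline{|x^\perp\rangle\langle x|}=(|x^\perp\rangle\langle x|)^\dagger$ whenever a vectorized operator is pushed across $(\cdot)^T\otimes{\bf1}$, and verifying the two identities $K_\lambda^T(K_\lambda^{-1})^T={\bf1}$ and $L_\lambda K_\lambda=\sqrt{\lambda(1-\lambda)}|x^\perp\rangle\langle x|$. A slightly longer, conjugation-free alternative is to test $\mathsf C(\Phi\Psi_{\lambda'}^{-1})\geq 0$ against a fixed $z={\rm vec}(G)$ with $G\in\mathcal K_\Phi^\perp$, simultaneously for all $\lambda'\in(0,\lambda]$ (which is permitted by Lemma~\ref{lemma_properties_Psilambda}~(iii)); this produces $\big(1-(1-\lambda')^{-1/2}\big)^2\sum_j|\langle x|K_j^\dagger G|x\rangle|^2\geq\tfrac{\lambda'}{1-\lambda'}\sum_j|\langle x^\perp|K_j^\dagger G|x\rangle|^2$, and letting $\lambda'\to0^+$—where the left prefactor is $O(\lambda'^2)$ while $\tfrac{\lambda'}{1-\lambda'}$ is $O(\lambda')$—forces $\sum_j|\langle x^\perp|K_j^\dagger G|x\rangle|^2=0$, which again gives~(iii).
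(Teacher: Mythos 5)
Your proposal is correct, and its main step takes a genuinely different route from the paper. To derive~(i), the paper first invokes Lemma~\ref{lemma_properties_Psilambda}~(iii) to get $\mathsf C(\Phi\Psi_{\lambda'}^{-1})\geq 0$ for \emph{all} $\lambda'\in[0,\lambda]$, compresses the expansion~\eqref{eq:PhiPsi_expansion} with $\iota$, and then extracts the sign of the first derivative at $\lambda'=0$ (the route you sketch as your ``conjugation-free alternative''). Your primary argument instead conjugates Eq.~\eqref{eq:C_Phi_Psiinv} by $K_\lambda^T\otimes{\bf 1}$, and the identities $K_\lambda^{-1}K_\lambda={\bf1}$ and $L_\lambda K_\lambda=\sqrt{\lambda(1-\lambda)}\,|x^\perp\rangle\langle x|$ (both of which I checked) turn complete positivity at the \emph{single} given $\lambda$ into the exact operator inequality $\mathsf C(\Phi)\geq\lambda N$; compressing with $\iota$ then gives $\iota^\dagger N\iota=0$ immediately. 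This is cleaner: it needs neither Lemma~\ref{lemma_properties_Psilambda}~(iii) nor any limiting/derivative argument, and it isolates the ``loss of positivity on $\ker\mathsf C(\Phi)$'' in one algebraic line. (The paper's expansion~\eqref{eq:PhiPsi_expansion} is, however, reused in the proof of Theorem~\ref{thm_main_div}, so it earns its keep there.) Your treatment of the equivalences matches the paper's in substance; the identity $N\,{\rm vec}(G)={\rm vec}\big(\Phi(|x^\perp\rangle\langle x^\perp|)G|x\rangle\langle x|\big)$ is correct and gives (i)$\Leftrightarrow$(ii) slightly more directly than the paper's separate quadratic-form computations. One cosmetic slip: in your bookkeeping remark, $\overline{|x^\perp\rangle\langle x|}$ equals $\big((|x^\perp\rangle\langle x|)^\dagger\big)^T=(|x\rangle\langle x^\perp|)^T$, not $(|x^\perp\rangle\langle x|)^\dagger$; your actual computations use the correct identity, so nothing downstream is affected.
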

\begin{proof}
To show that condition (i) is necessary assume that $\Phi\Psi_\varepsilon^{-1}$ is completely positive for some $\varepsilon\in(0,1)$.
By Lemma~\ref{lemma_properties_Psilambda}~(iii) this implies that the same is true (i.e., $\mathsf C(\Phi\Psi_\lambda^{-1})\geq 0$) for all $\lambda\in[0,\varepsilon]$.

First we re-write Eq.~\eqref{eq:C_Phi_Psiinv}
in the following way:
Defining $Z:=|x^\perp\rangle\langle x|^T\otimes{\bf1}=|\overline{x}\rangle\langle\overline{x^\perp}|\otimes{\bf1}$---where $\overline{(\cdot)}$, here and henceforth, denotes the complex conjugate---as well as
$f:[0,1)\to[0,\infty)$ via $f(\lambda):=(1-\lambda)^{-1/2}-1$, one readily finds
$(K_\lambda^{-1})^T={\bf1}+f(\lambda)|\overline x\rangle\langle\overline x|$,
$(K_\lambda^{-1})^T\otimes{\bf1}={\bf1}+f(\lambda)ZZ^\dagger $,
and
$L_\lambda^T\otimes{\bf1}=\sqrt\lambda Z$.
Thus we obtain
\begin{align}
\mathsf C(\Phi\Psi_\lambda^{-1})&=({\bf1}+f(\lambda)ZZ^\dagger )\mathsf C(\Phi)({\bf1}+f(\lambda)ZZ^\dagger )-\frac{\lambda}{1-\lambda}Z\mathsf C(\Phi)Z^\dagger \notag\\
&= \mathsf C(\Phi)+f(\lambda)\{ZZ^\dagger ,\mathsf C(\Phi)\}+f(\lambda)^2ZZ^\dagger \mathsf C(\Phi)ZZ^\dagger -\frac{\lambda}{1-\lambda}Z\mathsf C(\Phi)Z^\dagger \,.\label{eq:PhiPsi_expansion}
\end{align}
Now let $\iota:\mathbb C^{n^2-{\rm rank}\,(\mathsf C(\Phi))}\to {\rm ker}\,(\mathsf C(\Phi))$ be any unitary. Because $ \mathsf C(\Phi\Psi_\lambda^{-1})\geq 0$ Eq.~\eqref{eq:PhiPsi_expansion} implies
$$\iota^\dagger \Big(\mathsf C(\Phi)+f(\lambda)\{ZZ^\dagger ,\mathsf C(\Phi)\}+f(\lambda)^2ZZ^\dagger \mathsf C(\Phi)ZZ^\dagger -\frac{\lambda}{1-\lambda}Z\mathsf C(\Phi)Z^\dagger \Big)\iota\geq 0\,.$$
Moreover, as the range of $\iota$ is the kernel of $\mathsf C(\Phi)$ this simplifies to
$$
\iota^\dagger \Big(f(\lambda)^2ZZ^\dagger \mathsf C(\Phi)ZZ^\dagger -\frac{\lambda}{1-\lambda}Z\mathsf C(\Phi)Z^\dagger \Big)\iota\geq 0\,.
$$
Recalling that this holds for all $\lambda\in[0,\varepsilon]$ 
and that $f(0)=0$,
the first derivative at zero has to be positive semi-definite:
Given a general curve $\{A(t)\}_{t\in[0,\varepsilon]}$, $\varepsilon>0$ of positive semi-definite matrices which satisfies $A(0)=0$ and which is
differentiable at zero, $A'(0)=\lim_{h\to 0^+}h^{-1}A(h)\geq 0$ because the positive semi-definite matrices form a closed cone.
Thus we find as a necessary condition
\begin{align*}
0
&\leq\iota^\dagger \Big(\frac{d}{d\lambda}f(\lambda)^2\Big|_{\lambda=0}ZZ^\dagger \mathsf C(\Phi)ZZ^\dagger -\frac{d}{d\lambda}\frac{\lambda}{1-\lambda}\Big|_{\lambda=0}Z\mathsf C(\Phi)Z^\dagger \Big)\iota=-\iota^\dagger Z\mathsf C(\Phi)Z^\dagger \iota\,,
\end{align*}
i.e., $\iota^\dagger Z\mathsf C(\Phi)Z^\dagger \iota\leq 0$;
here we used that $\frac{d}{d\lambda}f(\lambda)^2|_{\lambda=0}=2f'(0)f(0)=0$
and
$\frac{d}{d\lambda}\frac{\lambda}{1-\lambda}|_{\lambda=0}=1$.
On the other hand $\iota^\dagger Z\mathsf C(\Phi)Z^\dagger\iota\geq 0$ because $\mathsf C(\Phi)\geq 0$ so altogether we find $\iota^\dagger Z\mathsf C(\Phi)Z^\dagger \iota=0$, i.e., (i).

``(i) $\Rightarrow$ (ii)'': Let any $Z\in\mathcal K_\Phi^\perp$ be given; we have to show $\Phi(|x^\perp\rangle\langle x^\perp|)Z   
|x\rangle=0$. 
First note  ${\rm vec}(Z)\in{\rm ker}(\mathsf C(\Phi))$ by Lemma~\ref{lemma_kerC_vecKPhi} so
$\langle {\rm vec}(Z)|(|x^\perp\rangle\langle x|^T\otimes{\bf1})\mathsf C(\Phi)(|x^\perp\rangle\langle x|^T\otimes{\bf1})^\dagger |{\rm vec}(Z)\rangle=0$
by~(i).
Substituting the definitions of ${\rm vec}$ and $\mathsf C$, a straightforward computation shows that the latter expression is equal to
$\langle x|  Z^\dagger \Phi(|x^\perp\rangle\langle x^\perp|)Z| x\rangle=\|\sqrt{ \Phi(|x^\perp\rangle\langle x^\perp|)}Z| x\rangle\|^2$.
But we just showed that this is zero so 
$
 \Phi(|x^\perp\rangle\langle x^\perp|)Z| x\rangle=\sqrt{ \Phi(|x^\perp\rangle\langle x^\perp|)}\sqrt{ \Phi(|x^\perp\rangle\langle x^\perp|)}Z| x\rangle=0
$,
as desired.

``(ii) $\Rightarrow$ (iii)'': 
Condition~(ii) is linear in the considered element of $\mathcal K_\Phi^\perp$ meaning it suffices to check on a basis, i.e., $\Phi(|x^\perp\rangle\langle x^\perp|)G_k   
|x\rangle=0$ for all $k$. Using any Kraus operators $\{K_j\}_j$ of $\Phi$
$$
0=\langle x|G_k^\dagger \Phi(|x^\perp\rangle\langle x^\perp|)G_k|x\rangle=\sum_j\langle x|G_k^\dagger K_j|x^\perp\rangle\langle x^\perp|K_j^\dagger G_k |x\rangle=\sum_j|\langle x^\perp|K_j^\dagger G_k |x\rangle|^2
$$
for all $k$ which can only be true if $\langle x^\perp|K_j^\dagger G_k |x\rangle=0$ for all $j,k$.

``(iii) $\Rightarrow$ (i)'':  We have to show $\langle y|  (|x^\perp\rangle\langle x|^T\otimes{\bf1})\mathsf C(\Phi)(|x^\perp\rangle\langle x|^T\otimes{\bf1})^\dagger |z\rangle=0$ for any vectors $y,z\in{\rm ker}(\mathsf C(\Phi))$. Much like in the proof of ``(i) $\Rightarrow$ (ii)'', using Lemma~\ref{lemma_kerC_vecKPhi} one sees that this is equivalent to $\langle x|  {\rm vec}^{-1}(y)^\dagger \Phi(|x^\perp\rangle\langle x^\perp|){\rm vec}^{-1}(z)  | x\rangle=0$, resp.~$\langle x|  Y^\dagger\Phi(|x^\perp\rangle\langle x^\perp|)Z  | x\rangle=0$ for all $Y,Z\in\mathcal K_\Phi^\perp$.
But this evaluates to
$$
\langle x|  Y^\dagger\Phi(|x^\perp\rangle\langle x^\perp|)Z  | x\rangle=\sum_j\langle x|  Y^\dagger K_j|x^\perp\rangle\langle x^\perp|K_j^\dagger Z  | x\rangle=\sum_j\overline{\langle x^\perp|  K_j^\dagger Y|x\rangle}\langle x^\perp|K_j^\dagger Z  | x\rangle
$$
which by assumption (and linearity) is zero, as desired.
\end{proof}

\subsection{Sufficient criterion for complete positivity of $\Phi\Psi_\lambda^{-1}$}\label{sec32}

At this point we are ready to state and prove our main result which is a slight modification of the necessary condition from Proposition~\ref{prop_necess_division}:
on top of $\langle x^\perp|K_j^\dagger G_k|x\rangle$ vanishing for all $j,k$ we will require that the same is true for $\langle x|K_j^\dagger G_k|x\rangle$.
More precisely, the following result holds:

\begin{thm}\label{thm_main_div}
Let $\Phi\in\mathsf{CPTP}(n)$ and orthogonal unit vectors $x,x^\perp\in\mathbb C^n$ be given. Assume that one of the following equivalent statements holds:
\begin{itemize}
\item[(i)] $\Phi(|x^\perp\rangle\langle x^\perp|)\mathcal K_\Phi^\perp|x\rangle=\{0\}=\Phi(|x\rangle\langle x|)\mathcal K_\Phi^\perp   
|x\rangle$.
\item[(ii)] $\langle x^\perp|K_j^\dagger G_k|x\rangle=0=\langle x|K_j^\dagger G_k|x\rangle$ for all $j,k$, where $\{K_j\}_j$ is any set of Kraus operators of $\Phi$ and $\{G_k\}_k$ is any basis of $\mathcal K_\Phi^\perp$.
\end{itemize}
Then there exists $\lambda\in(0,1)$ such that $\Phi\Psi_\lambda^{-1}$ is completely positive.
In particular, $\Phi=(\Phi\Psi_\lambda^{-1})\Psi_\lambda$ is divisible.
Moreover if, in addition, $\Phi(|x\rangle\langle x|)\neq\Phi(|x^\perp\rangle\langle x^\perp|)$,
then there exists $\lambda\in(0,1)$ such that 
$\Phi\Psi_\lambda^{-1}$ is completely positive and its Kraus rank is strictly smaller than the Kraus rank of $\Phi$.
\end{thm}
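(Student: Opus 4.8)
The plan is to transport everything to Choi matrices and to work with the orthogonal splitting $\mathbb C^{n^2}={\rm ker}(\mathsf C(\Phi))\oplus{\rm ran}(\mathsf C(\Phi))$, writing $P_0$ for the orthogonal projection onto ${\rm ker}(\mathsf C(\Phi))$ and $P_1:={\bf1}-P_0$. I would first record that (i) and (ii) are equivalent: the chain (i)$\Leftrightarrow$(ii)$\Leftrightarrow$(iii) inside Proposition~\ref{prop_necess_division} does not use orthogonality of the two vectors involved, so running it once for the pair $(x^\perp,x)$ and once for $(x,x)$---and using $|x\rangle\langle x|^T\otimes{\bf1}=|\overline x\rangle\langle\overline x|\otimes{\bf1}=ZZ^\dagger$ where $Z:=|x^\perp\rangle\langle x|^T\otimes{\bf1}$ as in Sec.~\ref{sec31}---recasts the two equalities in (ii) as
\begin{equation*}
P_0\,Z\mathsf C(\Phi)Z^\dagger P_0=0\qquad\text{and}\qquad P_0\,ZZ^\dagger\mathsf C(\Phi)ZZ^\dagger P_0=0 .
\end{equation*}
Writing $Z\mathsf C(\Phi)Z^\dagger=(\sqrt{\mathsf C(\Phi)}Z^\dagger)^\dagger(\sqrt{\mathsf C(\Phi)}Z^\dagger)$ and similarly for the second term, the first equality forces $\sqrt{\mathsf C(\Phi)}Z^\dagger P_0=0$, hence $\mathsf C(\Phi)Z^\dagger P_0=0$, and the second forces $\mathsf C(\Phi)ZZ^\dagger P_0=0$; taking adjoints, $P_0 Z\mathsf C(\Phi)=0$ and $P_0 ZZ^\dagger\mathsf C(\Phi)=0$. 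Together with $\mathsf C(\Phi)P_0=P_0\mathsf C(\Phi)=0$ this is the algebraic input for everything below.

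I would then substitute these identities into the expansion~\eqref{eq:PhiPsi_expansion} of $\mathsf C(\Phi\Psi_\lambda^{-1})$. Every term appearing in~\eqref{eq:PhiPsi_expansion} is annihilated by $P_0$ on the left: $\mathsf C(\Phi)$ and $\mathsf C(\Phi)ZZ^\dagger$ because $P_0\mathsf C(\Phi)=0$; $ZZ^\dagger\mathsf C(\Phi)$ and $ZZ^\dagger\mathsf C(\Phi)ZZ^\dagger$ because $P_0 ZZ^\dagger\mathsf C(\Phi)=0$; and $Z\mathsf C(\Phi)Z^\dagger$ because $P_0 Z\mathsf C(\Phi)=0$. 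Hence $P_0\,\mathsf C(\Phi\Psi_\lambda^{-1})=0$ for \emph{every} $\lambda\in[0,1)$, and since $\mathsf C(\Phi\Psi_\lambda^{-1})$ is Hermitian (both $\Phi$ and $\Psi_\lambda^{-1}$ preserve Hermiticity) this means $\mathsf C(\Phi\Psi_\lambda^{-1})=P_1\mathsf C(\Phi\Psi_\lambda^{-1})P_1=:R(\lambda)$ is supported on ${\rm ran}(\mathsf C(\Phi))$. In particular $\Phi\Psi_\lambda^{-1}$ is completely positive iff $R(\lambda)\geq 0$, and ${\rm rank}\,\mathsf C(\Phi\Psi_\lambda^{-1})={\rm rank}\,R(\lambda)\leq{\rm rank}\,\mathsf C(\Phi)$. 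Now $R$ is continuous on $[0,1)$ (a polynomial in $f(\lambda)$ and $\lambda/(1-\lambda)$) and $R(0)$ is the restriction of $\mathsf C(\Phi)$ to its own range, hence positive definite there; so $R(\lambda)>0$ for all sufficiently small $\lambda>0$. For such $\lambda$ one has $\Phi\Psi_\lambda^{-1}\in\mathsf{CPTP}(n)$ (trace preservation is automatic since $\Psi_\lambda$ is invertible and trace-preserving), and $\Phi=(\Phi\Psi_\lambda^{-1})\Psi_\lambda$ exhibits $\Phi$ as divisible.

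For the sharpening, assume additionally $D:=\Phi(|x\rangle\langle x|)-\Phi(|x^\perp\rangle\langle x^\perp|)\neq 0$; as a difference of images of states under a trace-preserving channel, $D$ is Hermitian and traceless, hence has a strictly negative eigenvalue. Consider $\mathcal I:=\{\lambda\in[0,1):R(\lambda)\geq 0\}$, which is downward closed by Lemma~\ref{lemma_properties_Psilambda}~(iii), closed in $[0,1)$ by continuity of $R$ together with closedness of the positive semi-definite cone, and a neighbourhood of $0$ by the previous paragraph; set $\lambda^\ast:=\sup\mathcal I$. The one genuine computation here is that multiplying~\eqref{eq:PhiPsi_expansion} by $(1-\lambda)$ and letting $\lambda\to 1^-$ yields
\begin{equation*}
(1-\lambda)\,\mathsf C(\Phi\Psi_\lambda^{-1})\;\longrightarrow\;ZZ^\dagger\mathsf C(\Phi)ZZ^\dagger-Z\mathsf C(\Phi)Z^\dagger=|\overline x\rangle\langle\overline x|\otimes D ,
\end{equation*}
the last equality following in one line from $\mathsf C(\Phi)=\sum_{j,k}|j\rangle\langle k|\otimes\Phi(|j\rangle\langle k|)$. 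Were $\lambda^\ast=1$, the left-hand side would be positive semi-definite for all $\lambda\in[0,1)$, and so would its limit $|\overline x\rangle\langle\overline x|\otimes D$---impossible since $D$ has a negative eigenvalue. Hence $\lambda^\ast\in(0,1)$, so $R$ is continuous at $\lambda^\ast$ and $R(\lambda^\ast)\geq 0$, i.e.\ $\Phi\Psi_{\lambda^\ast}^{-1}$ is completely positive. Finally $R(\lambda^\ast)$ cannot be positive definite on ${\rm ran}(\mathsf C(\Phi))$, since then it would remain positive definite for $\lambda$ slightly larger than $\lambda^\ast$ (still $<1$), contradicting $\lambda^\ast=\sup\mathcal I$; therefore ${\rm rank}\,\mathsf C(\Phi\Psi_{\lambda^\ast}^{-1})={\rm rank}\,R(\lambda^\ast)<{\rm rank}\,\mathsf C(\Phi)$, the claimed drop in Kraus rank.

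I expect the main obstacle to be the opening step---realising that it is the \emph{two} scalar hypotheses together, and not just the single necessary condition of Proposition~\ref{prop_necess_division} (which only kills the $(0,0)$-block of the dangerous negative term), that pin $\mathsf C(\Phi\Psi_\lambda^{-1})$ to ${\rm ran}(\mathsf C(\Phi))$ for all $\lambda$. Once that block structure is in hand, the existence of a suitable $\lambda$, divisibility, and the Kraus-rank drop are soft consequences of continuity and the single explicit $\lambda\to 1^-$ limit.
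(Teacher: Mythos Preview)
Your argument is correct and follows essentially the same route as the paper: the projections $P_0,P_1$ play the role of the paper's isometries $\iota,\xi$, and your $(1-\lambda)\,\mathsf C(\Phi\Psi_\lambda^{-1})\to|\overline x\rangle\langle\overline x|\otimes D$ limit is the Choi-matrix version of the paper's direct evaluation of $\Phi\Psi_\lambda^{-1}(|x\rangle\langle x|)$---both yield the same obstruction to positivity. One small point you glossed over: to conclude that $\Phi=(\Phi\Psi_\lambda^{-1})\Psi_\lambda$ is a \emph{nontrivial} factorization in the sense of divisibility you must rule out that $\Phi\Psi_\lambda^{-1}$ is unitary; since your $R(\lambda)>0$ for small $\lambda$ forces its Kraus rank to equal that of $\Phi$, it suffices to note that hypothesis~(ii) already excludes Kraus rank~$1$ (e.g., if $\Phi=K_1(\cdot)K_1^\dagger$ then $G:=K_1|x^\perp\rangle\langle x|\in\mathcal K_\Phi^\perp$ gives $\langle x^\perp|K_1^\dagger G|x\rangle=1$). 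The paper instead covers the unitary case by splitting $\Psi_\lambda$ via Lemma~\ref{lemma_properties_Psilambda}(ii), which is more self-contained but, in light of your rank-preservation observation, not actually needed.
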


\begin{proof}[Proof of Thm.~\ref{thm_main_div}]
First, equivalence of~(i) and~(ii) is shown just like
Prop.~\ref{prop_necess_division} ``(ii) $\Leftrightarrow$ (iii)''.

Now assume that~(ii) holds; we have to find $\lambda\in(0,1)$ such that $\mathsf C(\Phi\Psi_\lambda^{-1})\geq 0$.
Recall Eq.~\eqref{eq:PhiPsi_expansion}
$$
\mathsf C(\Phi\Psi_\lambda^{-1})= \mathsf C(\Phi)+f(\lambda)\{ZZ^\dagger ,\mathsf C(\Phi)\}+f(\lambda)^2ZZ^\dagger \mathsf C(\Phi)ZZ^\dagger -\frac{\lambda}{1-\lambda}Z\mathsf C(\Phi)Z^\dagger
$$
(as well as the notation $Z=|\overline{x}\rangle\langle\overline{x^\perp}|\otimes{\bf1}$)
and let $\iota:\mathbb C^{n^2-{\rm rank}\,(\mathsf C(\Phi))}\to {\rm ker}\,(\mathsf C(\Phi))$ be any unitary.
In Proposition~\ref{prop_necess_division} we saw that $\iota^\dagger Z\mathsf C(\Phi)Z^\dagger \iota=0$ is equivalent to $\langle x^\perp|X^\dagger K_j|x\rangle=0$ for all $j$ and all $X\in\mathcal K_\Phi^\perp$. Because $ZZ^\dagger=|\overline x\rangle\langle\overline x|\otimes{\bf1}$ one, analogously, finds that  $\iota^\dagger ZZ^\dagger \mathsf C(\Phi)ZZ^\dagger \iota=0$ is equivalent to $\langle x|X^\dagger K_j|x\rangle=0$ for all $j$ and all $X\in\mathcal K_\Phi^\perp$.
In other words~(ii) forces both $\iota^\dagger Z\mathsf C(\Phi)Z^\dagger \iota$ and $\iota^\dagger ZZ^\dagger \mathsf C(\Phi)ZZ^\dagger \iota$ to vanish.
But $\Phi\in\mathsf{CP}(n)$ implies
$
0=\iota^\dagger Z\mathsf C(\Phi)Z^\dagger \iota=( \sqrt{\mathsf C(\Phi)} Z^\dagger\iota )^\dagger( \sqrt{\mathsf C(\Phi)} Z^\dagger\iota )
$
which shows $ \mathsf C(\Phi)Z^\dagger\iota=\sqrt{\mathsf C(\Phi)}  \sqrt{\mathsf C(\Phi)} Z^\dagger\iota=0$; similarly, $\iota^\dagger ZZ^\dagger \mathsf C(\Phi)ZZ^\dagger \iota=0$ does imply $ \mathsf C(\Phi) ZZ^\dagger\iota=0$. Altogether, combining Eq.~\eqref{eq:PhiPsi_expansion} with our assumption~(ii) shows $\mathsf C(\Phi\Psi_\lambda^{-1})\iota=0$, that is, the kernel of $\mathsf C(\Phi)$ does not shrink when applying $\Psi_\lambda^{-1}$ for any $\lambda>0$.
This is the key to the desired complete positivity:
given any unitary $\xi:\mathbb C^{{\rm rank}\,(\mathsf C(\Phi))}\to {\rm ran}\,(\mathsf C(\Phi))$ (so $\xi\xi^\dagger +\iota\iota^\dagger ={\bf1}_{n^2}$)
\begin{equation}\label{eq:C_decomp_xi}
\mathsf C(\Phi\Psi_\lambda^{-1})=(\xi\xi^\dagger +\iota\iota^\dagger ) \mathsf C(\Phi\Psi_\lambda^{-1}) (\xi\xi^\dagger +\iota\iota^\dagger )= \xi\big(\xi^\dagger \mathsf C(\Phi\Psi_\lambda^{-1})\xi\big)\xi^\dagger \,.
\end{equation}
   But $\xi^\dagger \mathsf C(\Phi\Psi_\lambda^{-1})\xi\to\xi^\dagger \mathsf C(\Phi)\xi$ as $\lambda\to 0^+$ and the latter operator is positive definite (by definition of $\xi$).
Thus, because the eigenvalues of continuous paths are continuous \cite[Ch.~II, Thm.~5.2]{Kato80} there has to exist $\lambda>0$ such that $\xi^\dagger \mathsf C(\Phi\Psi_{\widehat\lambda}^{-1})\xi>0$ for all $\widehat\lambda\in[0,\lambda]$. In particular, this by Eq.~\eqref{eq:C_decomp_xi} means that $\mathsf C(\Phi\Psi_{\widehat\lambda}^{-1})\geq 0$ for all $\widehat\lambda\in[0,\lambda]$.
This also proves divisibility: Choose any 
$\widehat\lambda\in[0,\lambda]\cap(0,1)$.
If the Kraus rank of $\Phi\Psi_{\widehat\lambda}^{-1}$ is two or larger, then $\Phi=(\Phi\Psi_{\widehat\lambda}^{-1})(\Psi_{\widehat\lambda})$ is the product of two non-unitary channels. If, however, $\Phi\Psi_{\widehat\lambda}^{-1}$ is unitary, then by Lemma~\ref{lemma_properties_Psilambda}~(ii)
$$
\Phi=\Big(\Phi\Psi_{\widehat\lambda}^{-1}\Psi_{\frac{\widehat\lambda}{2-\widehat\lambda}}\Big)\Psi_{\frac{\widehat\lambda}2}\,,
$$
i.e., we expressed $\Phi$ as a product of channels of Kraus rank $2$ (due to $0<\frac{\widehat\lambda}{2}<\frac{\widehat\lambda}{2-\widehat\lambda}<1$).

Having established divisibility assume now that, in addition, $\Phi(|x\rangle\langle x|)\neq\Phi(|x^\perp\rangle\langle x^\perp|)$.
We will show that this guarantees the existence of $\mu\in(0,1)$ such that $\Phi\Psi_\mu^{-1}$ is not positive anymore.
This would conclude the proof: on the one hand, we already saw that $\xi^\dagger \mathsf C(\Phi\Psi_{\widehat\lambda}^{-1})\xi>0$ for all $\widehat\lambda\in[0,\lambda]$, and on the other hand $\mathsf C(\Phi\Psi_\mu^{-1})\not\geq 0$ does---by Eq.~\eqref{eq:C_decomp_xi}---force $\xi^\dagger \mathsf C(\Phi\Psi_{\mu}^{-1})\xi$ to have a negative eigenvalue.
So, again, by continuity of the eigenvalues\cite[Ch.~II, Thm.~5.2]{Kato80}
there exists $\lambda\in(0,\mu)\subset(0,1)$ such that all eigenvalues of $\xi^\dagger \mathsf C(\Phi\Psi_{\lambda}^{-1})\xi$ are non-negative but at least one of them is zero. Equivalently, the Kraus rank of $\mathsf C(\Phi\Psi_\lambda^{-1})= \xi\big(\xi^\dagger \mathsf C(\Phi\Psi_\lambda^{-1})\xi\big)\xi^\dagger $ is strictly smaller than the Kraus rank of $\mathsf C(\Phi)= \xi\big(\xi^\dagger \mathsf C(\Phi)\xi\big)\xi^\dagger $, as desired.

Now for the missing implication. Assume $\Phi(|x\rangle\langle x|)\neq\Phi(|x^\perp\rangle\langle x^\perp|)$. First note that this implies $\Phi(|x\rangle\langle x|)\not\geq\Phi(|x^\perp\rangle\langle x^\perp|)$: by contrapositive, if $\Phi(|x\rangle\langle x|)-\Phi(|x^\perp\rangle\langle x^\perp|)\geq 0$, then 
\begin{align*}
0=1-1={\rm tr}\big( \Phi(|x\rangle\langle x|)-\Phi(|x^\perp\rangle\langle x^\perp|)\big)=\big\|\Phi(|x\rangle\langle x|)-\Phi(|x^\perp\rangle\langle x^\perp|)\big\|_1
\end{align*}
(with $\|A\|_1:={\rm tr}(\sqrt{A^\dagger A})$ the usual trace norm)
and thus $\Phi(|x\rangle\langle x|)=\Phi(|x^\perp\rangle\langle x^\perp|)$.
In particular, there has to exist $y\in\mathbb C^n$ such that $\langle y|(\Phi(|x\rangle\langle x|)-\Phi(|x^\perp\rangle\langle x^\perp|))|y\rangle<0$.
On the other hand, Eq.~\eqref{eq:Psiinv_xy_orthogonal} implies
$$\Psi_\lambda^{-1}(|x\rangle\langle x|)=\frac1{1-\lambda} |x\rangle\langle x|-\Big(\frac{\lambda}{1-\lambda}\Big)|x^\perp\rangle\langle x^\perp|=\frac1{1-\lambda}(|x\rangle\langle x|-|x^\perp\rangle\langle x^\perp|)+|x^\perp\rangle\langle x^\perp|$$
for all $\lambda\in[0,1)$.
Combining these facts yields
$$
\big\langle y\big| \Phi\Psi_\lambda^{-1}(  |x\rangle\langle x|  )   \big|y\big\rangle=
\frac1{1-\lambda}\langle y|\Phi(|x\rangle\langle x|)-\Phi(|x^\perp\rangle\langle x^\perp|)|y\rangle+\langle y|\Phi(|x^\perp\rangle\langle x^\perp|)|y\rangle\,.
$$
But in the limit $\lambda\to 1^-$ this goes to $-\infty$ (because $\langle y|(\Phi(|x\rangle\langle x|)-\Phi(|x^\perp\rangle\langle x^\perp|))|y\rangle<0$) so there has to exist $\mu\in(0,1)$ for which $\Phi\Psi_\mu^{-1}$ is not positive anymore.
\end{proof}

\noindent This sufficient condition of ours is similar in spirit to recent divisibility results of Davalos and Ziman \cite{DZ23} obtained in the context of open systems and Markovian dynamics.
Note that repeated application of Theorem~\ref{thm_main_div} may be possible, that is, there may exist $m>1$, $\lambda_1,\ldots,\lambda_m\in(0,1)$ and pairs of orthogonal unit vectors $(x_1,x_1^\perp),\ldots,(x_m,x_m^\perp)\in\mathbb C^n\times\mathbb C^n$ such that
\begin{equation}\label{eq:Phi_repeat}
\Phi=\big(\Phi\Psi_{\lambda_m,x_m,x_m^\perp}^{-1}\ldots\Psi_{\lambda_1,x_1,x_1^\perp}^{-1}\big)\Psi_{\lambda_1,x_1,x_1^\perp}\ldots\Psi_{\lambda_m,x_m,x_m^\perp}
\end{equation}
and the difference between the Kraus rank of $\Phi$ and the Kraus rank of $\Phi\Psi_{\lambda_m,x_m,x_m^\perp}^{-1}\ldots\Psi_{\lambda_1,x_1,x_1^\perp}^{-1}$ is at least $m$.
This procedure has to terminate eventually because at some point $\Phi\Psi_{\lambda_m,x_m,x_m^\perp}^{-1}\ldots\Psi_{\lambda_1,x_1,x_1^\perp}^{-1}$ ($m\geq 1$) is either of the form $\Psi_{\lambda,x,x^\perp}$ itself or no such channel can be factored out from it anymore.
Be aware that the latter case does not say anything about whether that resulting channel is divisible, just that factoring out the particular elementary building block we chose is not possible anymore.\medskip

With our main result established let us present some corollaries which feature easier-to-check special cases:

\begin{corollary}\label{coro_Kphix}
Given any $\Phi\in\mathsf{CPTP}(n)$,
if there exists $x\neq 0$ such that $\mathcal K_\Phi^\perp   
|x\rangle=0$, then $\Phi$ is divisible.
\end{corollary}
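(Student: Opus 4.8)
The plan is to deduce this immediately from Theorem~\ref{thm_main_div} by verifying that its hypothesis~(i) holds for a suitable pair of vectors. First I would normalize: replacing $x$ by $x/\|x\|$ does not change the subspace $\mathcal K_\Phi^\perp|x\rangle$, so we may assume $\|x\|=1$. If $n=1$ the only channel is ${\rm id}$, which factors as ${\rm id}\cdot{\rm id}$ (a product of two unitary channels) and is hence divisible by definition; so assume $n\geq 2$ and pick any unit vector $x^\perp$ orthogonal to $x$ (which exists precisely because $n\geq 2$).

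Next I would observe that the corollary's hypothesis is exactly what trivializes condition~(i) of Theorem~\ref{thm_main_div}. Indeed, $\mathcal K_\Phi^\perp|x\rangle=0$ means $G|x\rangle=0$ for every $G\in\mathcal K_\Phi^\perp$; applying the bounded linear operators $\Phi(|x^\perp\rangle\langle x^\perp|)$ and $\Phi(|x\rangle\langle x|)$ to the zero subspace then yields $\Phi(|x^\perp\rangle\langle x^\perp|)\mathcal K_\Phi^\perp|x\rangle=\{0\}=\Phi(|x\rangle\langle x|)\mathcal K_\Phi^\perp|x\rangle$, which is Theorem~\ref{thm_main_div}~(i). (Equivalently one could check~(ii): for any Kraus operators $\{K_j\}_j$ of $\Phi$ and any basis $\{G_k\}_k$ of $\mathcal K_\Phi^\perp$ one has $G_k|x\rangle=0$, hence $\langle x^\perp|K_j^\dagger G_k|x\rangle=0=\langle x|K_j^\dagger G_k|x\rangle$ for all $j,k$.) Theorem~\ref{thm_main_div} then supplies $\lambda\in(0,1)$ with $\Phi\Psi_{\lambda,x,x^\perp}^{-1}$ completely positive together with the explicit non-trivial factorization $\Phi=(\Phi\Psi_\lambda^{-1})\Psi_\lambda$, so $\Phi$ is divisible.

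I do not expect a genuine obstacle here: all the work is done by Theorem~\ref{thm_main_div}, and the corollary is merely the remark that the theorem's geometric condition holds automatically---for \emph{every} admissible choice of $x^\perp$---as soon as $|x\rangle$ is annihilated by the whole orthogonal complement of the Kraus subspace. The only points deserving a word of care are the normalization of $x$ and the degenerate case $n=1$. One might additionally note (though it is not needed for divisibility) that if there happens to exist a unit vector $x^\perp\perp x$ with $\Phi(|x\rangle\langle x|)\neq\Phi(|x^\perp\rangle\langle x^\perp|)$, then Theorem~\ref{thm_main_div} furthermore yields a factor of strictly smaller Kraus rank.
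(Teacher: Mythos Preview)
Your proposal is correct and is precisely the argument the paper intends: the corollary is stated without an explicit proof because it follows immediately from Theorem~\ref{thm_main_div} once one notices that $\mathcal K_\Phi^\perp|x\rangle=\{0\}$ trivially forces both halves of condition~(i). Your additional bookkeeping (normalizing $x$, handling $n=1$, choosing any $x^\perp$) is exactly what is needed to make the deduction watertight.
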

This also reproduces the result mentioned in the introduction that every channel $\Phi$ with maximal Kraus rank is divisible \cite[Thm.~11]{Wolf08a} because then, trivially, $\mathcal K_\Phi^\perp=\{0\}$ (Lemma~\ref{lemma_kerC_vecKPhi}).
\begin{remark}
Note that Coro.~\ref{coro_Kphix} is equivalent to $\bigcap_{k}{\rm ker}(G_k)\neq\{0\}$ for any basis $\{G_k\}_k$ of $\mathcal K_\Phi^\perp$;
in particular, this implies $\bigcap_{j,k}{\rm ker}(K_j^\dagger G_k)\neq\{0\}$ for any set $\{K_j\}_j$ of Kraus operators of $\Phi$ which by Thm.~\ref{thm_main_div}~(ii) is sufficient for divisibility.
However this kernel condition is far from necessary, unless $n=2$: an equivalent formulation of Thm.~\ref{thm_main_div}~(ii) is that there exists an orthonormal basis in which all $\{K_j^\dagger G_k\}_{j,k}\subset\mathbb C^{n\times n}$ have a zero in the top-left corner and a zero just below that. In the single-qubit case---because any orthogonal unit vectors $x,x^\perp\in\mathbb C^2$ form an orthonormal basis of $\mathbb C^2$---this boils down to $\bigcap_{j,k}{\rm ker}(K_j^\dagger G_k)\neq\{0\}$.
In particular, this yields a simple-to-verify sufficient condition for divisibility specifically for qubit channels.
\end{remark}

It should not come as a surprise that our result reproduces the sufficient condition for divisibility of a non-negative matrix $A$ by Richman and Schneider 
\cite[Thm~2.4]{RS74}
mentioned in the introduction: Their condition (${\rm sgn}(A)|j\rangle\geq {\rm sgn}(A)|k\rangle$ for some $j\neq k$) translates to a condition for classical channels via the standard embedding $\Phi_A(|j\rangle\langle k|):=\delta_{jk}\sum_{l=1}^nA_{lj}|l\rangle\langle l|$
which is equivalent to our necessary (Prop.~\ref{prop_necess_division}) as well as our sufficient (Thm.~\ref{thm_main_div}) condition.\medskip

While in this classical case the Choi matrix is always diagonal, interestingly, our main result can be applied to block-diagonal cases (under local unitaries) assuming the blocks are of certain sizes:

\begin{corollary}\label{coro_div_blockdiag}
Let $\Phi\in\mathsf{CPTP}(n)$ and $U,V\in\mathsf{SU}(n)$ be given such that one of the following holds:
\begin{itemize}
\item[(i)] There exist $m\in\{n,\ldots,n^2\}$ as well as $X_1\in\mathbb C^{m\times m}$, $X_2\in\mathbb C^{(n^2-m)\times(n^2-m)}$ both positive semi-definite such that
$$
\mathsf C(\Phi)=(U^T\otimes V^\dagger )(X_1\oplus X_2)(U^T\otimes V^\dagger )^\dagger \quad
\text{ or }\quad
\mathsf C(\Phi)=(U^T\otimes V^\dagger )(X_2\oplus X_1)(U^T\otimes V^\dagger )^\dagger \,.
$$
\item[(ii)] There exist $m\in\{2n-1,\ldots,n^2\},m_1\in\{0,\ldots,n^2-m\}$ as well as $X_0\in\mathbb C^{m_1\times m_1}$, $X_1\in\mathbb C^{m\times m}$, $X_2\in\mathbb C^{(n^2-m-m_1)\times(n^2-m-m_1)}$ each positive semi-definite such that
$$
\mathsf C(\Phi)=(U^T\otimes V^\dagger )(X_0\oplus X_1\oplus X_2)(U^T\otimes V^\dagger )^\dagger \,.
$$
\end{itemize}
If $X_1>0$, then there exist  $\lambda\in(0,1)$ and orthogonal unit vectors $x,x^\perp\in\mathbb C^n$ such that
$\Phi\Psi_{\lambda}^{-1}$ is completely positive.
In particular, $\Phi=(\Phi\Psi_{\lambda}^{-1})\Psi_{\lambda}$ is divisible in this case.

\end{corollary}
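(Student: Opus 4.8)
The plan is to reduce, in all three cases, to Theorem~\ref{thm_main_div} by exhibiting a unit vector $x$ with the strong property $\mathcal K_\Phi^\perp|x\rangle=\{0\}$ (so that condition~(i) of that theorem holds for $x$ together with an arbitrary orthogonal partner $x^\perp$, cf.~Corollary~\ref{coro_Kphix}); the vector $x$ will be (a local-unitary rotate of) a computational basis vector $|j_0\rangle$.

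First I would strip off the local unitaries. Using Lemma~\ref{lemma_div_unitary_equiv} together with the transformation rule $\mathsf C({\rm Ad}_{V^\dagger}\Phi\,{\rm Ad}_U)=(U^T\otimes V^\dagger)\mathsf C(\Phi)(U^T\otimes V^\dagger)^\dagger$ (and the fact that $U^T$ is unitary whenever $U$ is), replacing $\Phi$ by ${\rm Ad}_V\,\Phi\,{\rm Ad}_{U^\dagger}$ makes $\mathsf C(\Phi)$ equal to the genuinely block-diagonal matrix $X_1\oplus X_2$ (resp.\ $X_2\oplus X_1$, resp.\ $X_0\oplus X_1\oplus X_2$), and by Lemma~\ref{lemma_div_unitary_equiv} it suffices to treat this transformed channel. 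Since $X_1>0$, its block contributes nothing to the kernel, so ${\rm ker}(\mathsf C(\Phi))\subseteq{\rm span}\{e_i:i\notin I\}$, where $I\subseteq\{1,\dots,n^2\}$ is the interval of coordinates occupied by $X_1$: $I=\{1,\dots,m\}$ or $\{n^2-m+1,\dots,n^2\}$ in case~(i), and $I=\{m_1+1,\dots,m_1+m\}$ in case~(ii).

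The combinatorial core---and the only step I expect to require real care---is to show that $I$ contains a full ``row'' $R_{j_0}:=\{(j_0-1)n+1,\dots,j_0n\}$ for some $j_0\in\{1,\dots,n\}$, because under the lexicographic identification $\mathbb C^{n^2}\simeq\mathbb C^n\otimes\mathbb C^n$ one has ${\rm span}\{e_i:i\in R_{j_0}\}=\mathbb C|j_0\rangle\otimes\mathbb C^n$. In case~(i) the interval $I$ abuts one end of $\{1,\dots,n^2\}$, so $m\geq n$ already yields $R_1\subseteq\{1,\dots,m\}$ (respectively $R_n\subseteq\{n^2-m+1,\dots,n^2\}$). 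In case~(ii) one has a floating interval of length $m\geq 2n-1$, and here I would use the elementary observation that any $2n-1$ consecutive integers contain some row: writing $m_1=qn+r$ with $0\leq r\leq n-1$ and taking $j_0:=\lceil m_1/n\rceil+1$, one gets $(j_0-1)n=\lceil m_1/n\rceil n\geq m_1$ and $j_0n\leq(m_1+n-1)+n\leq m_1+m$, the last inequality being exactly the point where $m\geq 2n-1$ enters; and $m_1\leq n^2-m\leq(n-1)^2$ keeps $j_0\leq n$. This is also where the precise thresholds $n$ and $2n-1$ in the hypotheses come from, so one has to track them carefully.

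With $j_0$ fixed we obtain ${\rm ker}(\mathsf C(\Phi))\subseteq(\mathbb C|j_0\rangle\otimes\mathbb C^n)^\perp=(\mathbb C|j_0\rangle)^\perp\otimes\mathbb C^n$. By Lemma~\ref{lemma_kerC_vecKPhi} every $G\in\mathcal K_\Phi^\perp$ satisfies ${\rm vec}(G)=\sum_j|j\rangle\otimes G|j\rangle\in{\rm ker}(\mathsf C(\Phi))$, and applying $\langle j_0|\otimes{\bf1}$ to this gives $G|j_0\rangle=0$; hence $\mathcal K_\Phi^\perp|j_0\rangle=\{0\}$. Taking $x:=|j_0\rangle$ and $x^\perp$ any unit vector orthogonal to it, condition~(i) of Theorem~\ref{thm_main_div} holds trivially, so that theorem produces $\lambda\in(0,1)$ with $\Phi\Psi_\lambda^{-1}$ completely positive and $\Phi=(\Phi\Psi_\lambda^{-1})\Psi_\lambda$ divisible. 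Finally I would carry this back to the original $\Phi$ through Lemma~\ref{lemma_div_unitary_equiv}, which preserves exactly the statement ``there exist $\lambda\in(0,1)$ and orthogonal unit vectors $x,x^\perp$ with $\Phi\Psi_\lambda^{-1}\in\mathsf{CPTP}(n)$''; alternatively one can skip the reduction and verify directly, by propagating the conjugating unitary through Lemma~\ref{lemma_kerC_vecKPhi}, that $\mathcal K_\Phi^\perp$ annihilates $U^\dagger|j_0\rangle$.
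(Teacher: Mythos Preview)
Your proposal is correct and follows essentially the same route as the paper: reduce to $U=V={\bf1}$ via Lemma~\ref{lemma_div_unitary_equiv}, use the block-diagonal structure together with $X_1>0$ to locate a computational basis vector $|j_0\rangle$ annihilated by $\mathcal K_\Phi^\perp$, and then invoke Corollary~\ref{coro_Kphix}. The paper treats case~(i) with $\mathsf C(\Phi)=X_1\oplus X_2$ explicitly and dismisses the remaining cases with ``analogously: due to the size of $X_1$ there always exists one column which is always zero''; you actually spell out the interval argument for case~(ii) and verify that $m\geq 2n-1$ is precisely what guarantees a full $n$-block $R_{j_0}$ inside $\{m_1+1,\dots,m_1+m\}$, which is a useful addition.
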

\begin{proof}
Without loss of generality assume $U=V={\bf1}$ (due to Lemma~\ref{lemma_div_unitary_equiv} combined with Lemma~\ref{lemma_choi_prod}).
In the first case $\mathsf C(\Phi)=X_1\oplus X_2$ implies
$\ker(\mathsf C(\Phi))=\ker(X_1)\oplus\ker(X_2)$.
But $X_1>0$ by assumption so Lemma~\ref{lemma_kerC_vecKPhi} shows that the first column of every element of ${\rm vec}^{-1}({\rm ker}(\mathsf C(\Phi)))=\mathcal K_\Phi^\perp$ is zero.
Hence $\mathcal K_\Phi^\perp|0\rangle=\{0\}$ which yields divisibility by Coro.~\ref{coro_Kphix}.
The cases $\mathsf C(\Phi)=X_2\oplus X_1$ and $\mathsf C(\Phi)=X_0\oplus X_1\oplus X_2$ are shown analogously: Due to the size of $X_1$
there always exists one column which is always zero for every element of $\mathcal K_\Phi^\perp$, so Coro.~\ref{coro_Kphix} can be applied accordingly.
\end{proof}
\noindent Note that without further assumptions on the sizes of $X_0,X_2$ in Coro.~\ref{coro_div_blockdiag}~(ii) the bound $m\geq 2n-1$ is optimal in general. For this consider
$$
\mathsf C(\Phi)=\begin{pmatrix}
0&0&0&0\\
0&1&0&0\\
0&0&1&0\\
0&0&0&0
\end{pmatrix}
$$
so $m=2<2n-1$ but $\mathcal K_\Phi^\perp={\rm vec}^{-1}({\rm ker}(\mathsf C(\Phi)))={\rm span}\{|0\rangle\langle 0|,|1\rangle\langle 1|\}$ does not admit any $x\neq 0$ such that $\mathcal K_\Phi^\perp|x\rangle=0$.
However, for suitable $m_1$ one can find lower values of $m$ which make the construction work (e.g., if $m_1$ is itself a multiple of $n$, then all $m\geq n$ work).

\section{Examples}\label{sec_ex}

Before concluding let us look at some examples which will, hopefully, illustrate
the scope of our divisibility criteria.
First, as mentioned in the introduction,
unital qubit channels of Kraus rank $3$ are the only indivisible qubit channels.
Thus our sufficient criterion necessarily fails, and in this first example we want to understand the underlying reason.

\begin{example}
Let $\Phi\in\mathsf{CPTP}(2)$ be unital with Kraus rank 3; in this example we will see that $\Phi\Psi_{\lambda}^{-1}$ is never positive, regardless of how $x,x^\perp$, and $\lambda\in(0,1)$ are chosen. First, it is well known that because $\Phi$ is unital there exist $U,V\in\mathsf U(2)$ such that the Pauli transfer matrix ($(\frac12{\rm tr}(\sigma_j\Phi(\sigma_k)))_{j,k}$) of $\Phi_{U,V}:={\rm Ad}_{V^\dagger }\cdot\Phi\cdot {\rm Ad}_U$ reads ${\rm diag}(1,x_1,x_2,x_3)$ with $1\geq x_1\geq x_2\geq|x_3|
$, cf.~\cite[Sec.~VI]{Wolf08a} or \cite{RSW02}.
This $\Phi_{U,V}$ has Kraus rank $3$ if and only if
there exist $0\leq b<a<1$
such that
$$
\mathsf C(\Phi_{U,V})=\begin{pmatrix}
1-a&0&0&1-a\\0&a&b&0\\0&b&a&0\\1-a&0&0&1-a
\end{pmatrix}\,.
$$
(Recall that basis transformations $U,V$ do not change the problem, cf.~Lemma~\ref{lemma_div_unitary_equiv}).
One readily verifies that $\Phi_{U,V}$ has Kraus operators $\{\sqrt{1-a}\,{\bf1},$
$\sqrt{(a+b)/2}\,\sigma_x,\sqrt{(a-b)/2}\,\sigma_y\}$) and that $\mathcal K_{\Phi_{U,V}}^\perp={\rm span}\{\sigma_z\}$.
This is why not just our sufficient, but already our necessary criterion (Prop.~\ref{prop_necess_division}) fails: If there existed orthogonal unit vectors $x,x^\perp\in\mathbb C^2$ such that $\langle x^\perp|K_j^\dagger \sigma_z|x\rangle=0$ for all $j=1,2,3$, then
$\{\sigma_z,\sigma_x\sigma_z,\sigma_y\sigma_z\}$ were simultaneously unitarily (upper) triangularizable which is impossible because
$\sigma_x,\sigma_y$
do not share a common eigenvector.
\end{example}

Another deep insight of Richman and Schneider \cite[Sec.~3]{RS74} was that a non-negative matrix $A\in\mathbb R_+^{3\times 3}$ is prime if and only if there exist permutation matrices $\tau,\pi$ such that $\tau A\pi$ has positive off-diagonals and the diagonal is all zeros.
Interestingly, this result has an analogue for qutrit channels $\Phi\in\mathsf{CPTP}(3)$: As we will see in the next example, our necessary criterion (Prop.~\ref{prop_necess_division}) implies that \textit{if} such a prime can be found on the diagonal of the Choi matrix of $\Phi$, then there is no way to divide $\Psi_\lambda$ from $\Phi$.

\begin{example}
Let $\Phi\in\mathsf{CPTP}(3)$ be given such that the diagonal of its Choi matrix $\mathsf C(\Phi)$ is given by $(0,a,1-a,b,0,1-b,c,1-c,0)$ for some $a,b,c\in(0,1)$ (we can disregard any possible permutations thanks to Lemma~\ref{lemma_div_unitary_equiv}). What we want to see now is that no matter how one chooses $x,x^\perp$ and no matter the off-diagonals of $\mathsf C(\Phi)$, $\Phi\Psi_\lambda^{-1}$ is never completely positive for any $\lambda\in(0,1)$.
First, because
$\mathsf C(\Phi)\geq 0$
the diagonal zeros force the following form:
$$
\mathsf C(\Phi)=\begin{pmatrix}
0&0&0&0&0&0&0&0&0\\
 0 & a & * & * & 0 & * & * & * & 0 \\
 0 & * & 1-a & * & 0 & * & * & * & 0 \\
 0 & * & * & b & 0 & * & * & * &  0\\
0&0&0&0&0&0&0&0&0\\
 0 & * & * & * & 0 & 1-b & * & * & 0 \\
 0 & * & * & * & 0 &  *& c & * & 0 \\
 0 & * & * & * & 0 & * & * & 1-c & 0 \\
0&0&0&0&0&0&0&0&0
\end{pmatrix}
$$
In particular Lemma~\ref{lemma_kerC_vecKPhi} shows that all $|j\rangle\langle j|$ are in $\mathcal K_\Phi^\perp$.
Thus by Prop.~\ref{prop_necess_division}~(ii),
for complete positivity of $\Phi\Psi_\lambda^{-1}$
it is necessary that $\Phi(|x^\perp\rangle\langle x^\perp|)|j\rangle\langle j|x\rangle=0$ for some orthogonal unit vectors $x,x^\perp$ and all $j$.
Next, trace-preservation of $\Phi$ forces some additional entries to vanish (marked in blue):
$$
\mathsf C(\Phi)=\begin{pmatrix}
0&0&0&0&0&0&0&0&0\\
 0 & a & * & * & 0 & * & * & {\color{blue}0} & 0 \\
 0 & * & 1-a & * & 0 & {\color{blue}0} & * & * & 0 \\
 0 & * & * & b & 0 & * & {\color{blue}0} & * &  0\\
0&0&0&0&0&0&0&0&0\\
 0 & * & {\color{blue}0} & * & 0 & 1-b & * & * & 0 \\
 0 & * & * & {\color{blue}0} & 0 &  *& c & * & 0 \\
 0 & {\color{blue}0} & * & * & 0 & * & * & 1-c & 0 \\
0&0&0&0&0&0&0&0&0
\end{pmatrix}
$$
Therefore $\Phi$ acts on any $X\in\mathbb C^{3\times 3}$ as follows:
$$
\Phi(X)=\begin{pmatrix}
bX_{22}+cX_{33}&*&*\\
*&aX_{11}+(1-c)X_{33}&*\\
*&*&(1-a)X_{11}+(1-b)X_{22}
\end{pmatrix}
$$
Inserting this into the necessary conditions $\langle j|\Phi(|x^\perp\rangle\langle x^\perp|)|j\rangle\langle j|x\rangle=0$, $j=1,2,3$ yields
\begin{align*}
\big(b\big|(x^\perp)_2\big|^2+c\big|(x^\perp)_3\big|^2\big) x_1 &=0\\
\big(a\big|(x^\perp)_1\big|^2+(1-c)\big|(x^\perp)_3\big|^2\big) x_2 &=0\\
\big((1-a)\big|(x^\perp)_1\big|^2+(1-b)\big|(x^\perp)_2\big|^2\big) x_3 &=0\,.
\end{align*}
But $x\neq 0$, meaning there exists $j$ such that---by the $j$-th condition---$x^\perp=c|j\rangle$
(as $a,b,c\in(0,1)$).
Hence $0=\langle x|x^\perp\rangle=cx_j$ shows $c=0$, i.e., $x^\perp=0$; this contradicts $x^\perp$ being a unit vector.
Reformulated,
the necessary condition from Prop.~\ref{prop_necess_division}~(ii) cannot be satisfied so $\Phi\Psi_\lambda^{-1}\not\in\mathsf{CP}(n)$ for all $\lambda>0$.
\end{example}
Finally, let us consider a constructive example, which will also serve as a comparison with the results of Wolf and Cirac on divisibility of qubit channels \cite[Sec.~VI]{Wolf08b}.
However, we want to stress that our divisibility criteria are more general as they work in arbitrary (finite) dimensions.

\begin{example}
Consider $\Phi\in\mathsf{CPTP}(2)$ defined via
$$
\mathsf C(\Phi)=\begin{pmatrix}
 1 & 0 & 0 & \frac{1}{3} \\
 0 & 0 & 0 & 0 \\
 0 & 0 & \frac{2}{3} & \frac{1}{3} \\
 \frac{1}{3} & 0 & \frac{1}{3} & \frac{1}{3} 
\end{pmatrix}.
$$
This channel has Kraus rank $3$ and is non-unital, meaning it can be written as a product of a non-unital channel of Kraus rank $2$ and a partial dephasing channel (in some basis) \cite[Thm.~18 ff.]{Wolf08a}:
$$
\widehat\Phi=\begin{pmatrix}
1&0&0&2/3\\0&1/3&0&1/3\\0&0&1/3&1/3\\0&0&0&1/3
\end{pmatrix}=\underbrace{\begin{pmatrix}
1&0&0&2/3\\0&1/\sqrt6&0&1/3\\0&0&1/\sqrt6&1/3\\0&0&0&1/3
\end{pmatrix}}_{=:\widehat{\Phi_1}}\underbrace{\begin{pmatrix}
1&0&0&0\\0&\sqrt{2/3}&0&0\\0&0&\sqrt{2/3}&0\\0&0&0&1
\end{pmatrix}}_{=:\widehat{\Phi_2}}
$$
Here, given arbitrary $\Psi\in\mathcal L(\mathbb C^{n\times n})$ we will write $\widehat\Psi$ for its representation matrix, i.e., $\widehat\Psi\in\mathbb C^{n^2\times n^2}$ is the unique matrix which satisfies ${\rm vec}(\Psi(X))=\widehat\Psi{\rm vec}(X)$ for all $X\in\mathbb C^{n\times n}$.
With this let us focus on applying our division algorithm to $\Phi$; for this we will use Coro.~\ref{coro_div_blockdiag}. Because
$$
\mathsf C(\Phi)=(\sigma_x\otimes{\bf1})\begin{pmatrix}
 \frac{2}{3} & \frac{1}{3} & 0 & 0 \\
 \frac{1}{3} & \frac{1}{3} & \frac{1}{3} & 0 \\
 0 & \frac{1}{3} & 1 & 0 \\
 0 & 0 & 0 & 0
\end{pmatrix}(\sigma_x\otimes{\bf1})^\dagger 
$$
and because
the top-right $3\times 3$ block is positive definite we can apply
Coro.~\ref{coro_div_blockdiag}~(i) with $m=3$, $X_2=0$, $x=\sigma_x|0\rangle=|1\rangle$, and $x^\perp=|0\rangle$ (recall Lemma~\ref{lemma_div_unitary_equiv}).
Thus $\Psi_\lambda$ is defined via the Kraus operators
$$
\begin{pmatrix}
1&0\\0&\sqrt{1-\lambda}
\end{pmatrix},\begin{pmatrix}
0&\sqrt\lambda\\0&0
\end{pmatrix}
$$
for all $\lambda\in[0,1]$
meaning
$$
\mathsf C(\Phi\Psi_\lambda^{-1})=\begin{pmatrix}
1&0&0&\frac{1}{3\sqrt{1-\lambda}}\\
0&0&0&0\\
0&0&\frac{2-3\lambda}{3-3\lambda}&\frac{1}{3-3\lambda}\\
\frac{1}{3\sqrt{1-\lambda}}&0&\frac{1}{3-3\lambda}&\frac{1}{3-3\lambda}
\end{pmatrix}.
$$
It is not difficult to see that $\mathsf C(\Phi\Psi_\lambda^{-1})\geq 0$ if and only if $\lambda\in[0,\frac16]$, and that $\Phi\Psi_{1/6}^{-1}$ is the channel with Kraus rank $2$ we are looking for (the existence of which is guaranteed by Thm.~\ref{thm_main_div} because
$\Phi(|0\rangle\langle 0|)\neq\Phi(|1\rangle\langle 1|)$).
This yields the decomposition of $\Phi$ into two (non-unital) channels both of Kraus rank $2$ via
$$
\widehat\Phi=\underbrace{\begin{pmatrix}
1&0&0&3/5\\
0&\sqrt{2/15}&0&2/5\\
0&0&\sqrt{2/15}&2/5\\
0&0&0&2/5
\end{pmatrix}}_{=\widehat{\Phi\Psi_\lambda^{-1}}}\underbrace{\begin{pmatrix}
1&0&0&1/6\\
0&\sqrt{5/6}&0&0\\
0&0&\sqrt{5/6}&0\\
0&0&0&5/6
\end{pmatrix}}_{=\widehat{\Psi_\lambda}}.
$$
\end{example}

\section{Conclusions \& Outlook}\label{sec_coout}

In this work we contributed to the common understanding of divisibility, a concept central to open systems theory, quantum dynamics, as well as quantum systems engineering. More precisely, we extended an idea from divisibility of non-negative matrices to the quantum realm which resulted in a simple algorithm to divide a given quantum channel $\Phi$ by certain ``elementary'' channels $\Psi_\lambda$. 
We found a necessary (Prop.~\ref{prop_necess_division}) as well as a sufficient condition (Thm.~\ref{thm_main_div}) for when this division $\Phi\Psi_\lambda^{-1}$ results in a valid channel, both relying on the Kraus subspace and its orthogonal complement.
This is the first non-trivial divisibility criterion which is applicable beyond qubits. Generically, this division even lowers the Kraus rank which is why repeated application of this algorithm (if possible) results in a factorization of $\Phi$ into in some sense ``simple'' channels (Eq.~\eqref{eq:Phi_repeat}).

Yet, the tools we employed are not limited to the elementary channel $\Psi_\lambda$ we chose.
Indeed, this work can be seen as a blueprint which can, in principle, be applied to any scenario of this form and which yields necessary and sufficient conditions for divisibility w.r.t.~the chosen $\Psi_\lambda$ (although the actual computations may become more involved).
Thus, in a way, our work can be adapted to tackle the question posed in the introduction where an experimenter has access to a set of channels $\mathcal S$ and they want to know whether some target channel $\Phi$ can be factorized using elements from $\mathcal S$ (and if so, how to do it).

An open question at this point is whether our sufficient condition is also necessary, and whether our necessary condition is also sufficient.
We did comment on how for classical channels $\Phi_A$ these conditions are equivalent (although the underlying reason for this fails in general).
A different
scenario where one could tackle this question are covariant channels, i.e., $[\Phi,[H,\cdot]]=0$ for some non-trivial Hamiltonian $H$:
in this case $[\mathsf C(\Phi),H^T\otimes{\bf1}-{\bf1}\otimes H]=0$ as is readily verified, making $\mathsf C(\Phi)$ a lot more structured for analysis purposes.
Another follow-up could be to (try to) use our results to extend the universal set of qubit channels (i.e., a set for which the generated semigroup are all channels)
\cite[Thm.~5.19]{BGN14}
to higher dimensions.
However, already the classical case of non-negative matrices shows that this requires knowledge of the indivisible channels in three and more dimensions.
While the
simpler, classical question has not been investigated in detail yet
it would be an interesting first step, and this paper's results should be useful for that.

\begin{acknowledgments}
I am grateful to Fereshte Shahbeigi for constructive comments during the preparation of this paper. This work has been supported by the Einstein Foundation (Einstein Research Unit on Quantum Devices) and the MATH+ Cluster of Excellence.
\end{acknowledgments}

\section*{Author Declarations}

\subsection*{Conflict of Interest}
The author has no conflicts to disclose.
\subsection*{Author Contributions}
\noindent\textbf{Frederik vom Ende:} Conceptualization (equal); Formal analysis (equal); Investigation (equal); Methodology (equal); Writing – original draft (equal); Writing – review and editing (equal).
\section*{Data Availability}
Data sharing is not applicable to this article as no new data were created or analyzed in this study.

%
%

\bibliography{../../../../../control21vJan20}

\end{document}